\documentclass[regno]{amsart}
\usepackage[british]{babel}
\usepackage[OT2,T2A,T1]{fontenc}
\usepackage[utf8]{inputenc}		
\usepackage{eulervm}	
\usepackage[scr=dutchcal, bb=boondox]{mathalfa}

\usepackage{titlesec}
\titleformat{\section}
	{\centering\Large\scshape\bfseries}{\thesection.}{1em}{} 
\titleformat{\subsection}
 	{\normalfont\bfseries\scshape}{\thesubsection.}{.5em}{}
\titleformat*{\paragraph}{\bfseries}

\usepackage{titletoc}

    \titlecontents{section}
    [0em] %
    {\medskip}
    {\thecontentslabel\quad}
    {}
    {\hfill\contentspage}

    \newenvironment{acknowledgements}{%
  
  \begin{abstract}
}{%
  \end{abstract}
}

\usepackage{amsmath,amsthm,amssymb,dsfont,extarrows,amscd,amsfonts,mathtools,bm,physics}
\usepackage[v1]{subfiles}
\usepackage{hyperref}
\usepackage{mathrsfs}
\usepackage[mathscr]{euscript}
\usepackage{tensor}
\usepackage[most]{tcolorbox}
\usepackage{stmaryrd}
\usepackage[capitalize]{cleveref}
\usepackage{MnSymbol}
\usepackage{thmtools}

\usepackage{tikz}
\usepackage{tikz-cd}
\usetikzlibrary{graphs,decorations.pathmorphing,decorations.markings}
\tikzset{
    labl/.style={anchor=south, rotate=90, inner sep=.5mm}
}

\usepackage{todonotes}
\usepackage{geometry}
\usepackage[alphabetic, initials]{amsrefs}

\usepackage{titlesec}

\titleformat{\paragraph}
{\normalfont\normalsize\bfseries}{\theparagraph}{1em}{}
\titlespacing*{\paragraph}
{0pt}{2.5ex}{1ex}

\DeclareFontFamily{U}{MnSymbolC}{}
\DeclareSymbolFont{MnSyC}{U}{MnSymbolC}{m}{n}
\DeclareFontShape{U}{MnSymbolC}{m}{n}{
	<-6>  MnSymbolC5
	<6-7>  MnSymbolC6
	<7-8>  MnSymbolC7
	<8-9>  MnSymbolC8
	<9-10> MnSymbolC9
	<10-12> MnSymbolC10
	<12->   MnSymbolC12}{}
\DeclareMathSymbol{\intprod}{\mathbin}{MnSyC}{'270}

\theoremstyle{plain} 
\newtheorem{thm}{Theorem} [section]

\newtheorem{lem}{Lemma} [section]
\newtheorem{prop}{Proposition}[section]

\theoremstyle{definition}
\newtheorem{defn}{Definition}[section]
\newtheorem{ex}{Example}[section]

\newtheorem{thmdef}{Theorem/Definition}[section]

\theoremstyle{remark} 
\newtheorem{oss}{Remark} [section]

\numberwithin{equation}{section}

\newcommand{\RR}{\mathbb{R}}
\newcommand{\CC}{\mathbb{C}}
\newcommand{\ZZ}{\mathbb{Z}}

\DeclareMathOperator{\X}{\mathfrak{X}}
\DeclareMathOperator{\Lie}{\mathscr{L}}

\newcommand{\Hol}{\mathscr{O}}
\newcommand{\RS}{\mathbb{P}^{1}}
\newcommand{\Hw}{\mathscr{H}}
\newcommand{\UC}{\mathscr{C}}
\newcommand{\Fr}{\mathscr{F}}

\newcommand{\Frst}{\overset{\star}{\Fr}}

\newcommand{\EE}{\mathscr{E}}
\newcommand{\End}{\mathrm{End}}
\newcommand{\cech}{\mathfrak{A}}

\newcommand{\inp}[2]{\langle\, \bm{#1}\,,\, \bm{#2} \,\rangle}

\title{open WDVV equations and $\bigvee$-systems}

\author[A.~Proserpio]{Alessandro Proserpio}
\address[A.~Proserpio]{
School of Mathematics and Statistics,
University of Glasgow, Glasgow, G12 8QQ, United Kingdom.}
\email{a.proserpio.1 [at] research.gla.ac.uk}

\author[I. A. B.~Strachan]{Ian A. B. Strachan}
\address[I. A. B.~Strachan]{School of Mathematics and Statistics,
University of Glasgow, Glasgow, G12 8QQ, United Kingdom.}
\email{ian.strachan [at] glasgow.ac.uk}

\begin{document}
\renewcommand{\hbar}{\hslash}

\begin{abstract}
The idea of a $\bigvee$-system was introduced by Veselov in the study of rational solutions of the WDVV equations of associativity. These are algebraic/geometric conditions on the set of covectors that appear in rational solutions to the WDVV equations. Here, this idea is generalized to open WDVV equations, which are an additional set of PDEs originating from open Gromow-Witten Theory. We develop -- for rank-one extensions -- algebraic/geometric conditions on the covectors that supplement the $\bigvee$-system to give rational solutions to the open WDVV equations. Examples, and the relation to superpotentials and to Dubrovin almost-duality, are given.
\end{abstract}

\maketitle
\tableofcontents

\section{Introduction}

The concept of a (Dubrovin)-Frobenius manifold, the geometric formulization of the WDVV equations of associativity, has played a pivotal role in unifying apparently disparate areas of mathematics and mathematical physics, from TQFTs and enumerative geometry, via algebraic geometry and singularity theory to integrable bi-Hamiltonian hierarchies. However, it has been shown that this is too rigid a structure to encompass recent developments in the field, and hence various generalization have been introduced, for example:
\begin{itemize}
\item[-] generalized Frobenius manifolds \cite{GenFrob};
\item[-] bi-flat F-manifolds \cite{ArsieBuryakLorRossi}.
\end{itemize}
The inspiration for this paper comes from another generalization: open Gromov-Witten theory \cites{horev2012opengromovwittenwelschingertheoryblowups,GKT26}.

Central to all these structures are the WDVV equations
\begin{align}\label{eq:WDVV}
\sum_{\mu,\nu=1}^n \bigl(\Fr_{\alpha \beta\mu}\eta^{\mu\nu}\,\,\Fr_{\nu \gamma \delta}-\Fr_{\alpha\gamma \mu }\eta^{\mu\nu}\,\,\Fr_{\nu \beta \delta}\bigr)=0\,.
\end{align}
Here $\Fr({\bf{t})}$ is a scalar function, known as the prepotential, and the constants $\eta^{\mu\nu}$ define a flat metric, with the coordinates $\{\bf t\}$ being the flat coordinates of this metric.

Two classes of solution, both associated with a finite irreducible Coxeter group $W$ were constructed in the 1990's. In one class, a Frobenius manifold structure was constructed on the orbit space $\mathbb{C}^N/W\,.$ The prepotential for this class is polynomial in the flat coordinates, and the scaling degrees of the variables are related to the degrees of the classical $W$-invariant polynomials. In the second class, stemming from Seiberg-Witten theory, the solution takes the compact form
\[
\Frst(\bm{z})=\tfrac{1}{4} \sum_{\alpha\in \mathfrak{R}_W} \alpha({\bm{z}})^2 \log \alpha({\bm{z}})\,,
\]
with $\mathfrak{R}_W$ being the root system of the Coxeter group $W\,.$ These two classes of solution were unified by Dubrovin's notion of \lq almost duality\rq, which provides a map between these two solutions. Schematically:
\[
\Fr\quad \overset{\rm almost}{\underset{\rm duality}{\longleftrightarrow}}\quad \Frst\,.
\]

\noindent In open Gromov-Witten theory one extends the multiplication by introducing additional vector-valued prepotentials ${\bm{\Omega}}$ (see Proposition \ref{opendef} for the precise definition) and this gives an extended multiplication of vectors and hence extensions of the WDVV-equations. So schematically:
\[
\Fr\quad \longleftrightarrow \quad \{ \Fr, {\bm{\Omega}}\}\,.
\]
Combining these ideas, one can develop almost-duality for open WDVV theory, and this was studied in \cite{openWDVVduality}.
\begin{equation}
       \begin{tikzcd}[every cell/.append style={align=center}]
   \begin{tabular}{c}
         $\Fr$ \\
          \footnotesize Weighted-homogeneous \\\footnotesize WDVV solution.{}
      \arrow[dd, leftrightarrow, "\text{duality}"]
      \end{tabular} & ${\relbar\joinrel\relbar\joinrel\relbar\joinrel\relbar\joinrel\relbar\joinrel\relbar\joinrel\relbar\joinrel\rightarrow}$ & \begin{tabular}{c}
         $\{ \Fr, {\bm{\Omega}}\}$ \\
          \footnotesize Weighted-homogeneous \\\footnotesize open WDVV solution.{}
          \arrow[dd, leftrightarrow, "\text{generalized duality}"]
      \end{tabular}\\
        & &\\
  \begin{tabular}{c}
         $\Frst$ \\
          \footnotesize Dual WDVV solution.
      \end{tabular}  & $\overset{\rm Section~\ref{mainresults}}{\relbar\joinrel\relbar\joinrel\relbar\joinrel\relbar\joinrel\relbar\joinrel\relbar\joinrel\relbar\joinrel\rightarrow}$&\begin{tabular}{c} 
             $\{ \Frst, \overset{\star}{\Omega} \}$ \\
          \footnotesize Dual open WDVV solution.
      \end{tabular} 
    \end{tikzcd}
\end{equation}

Separate to these developments, Veselov introduced the idea of a $\bigvee$-system. Here a solution of the WDVV equations is constructed by the ansatz
\begin{equation}\label{Fagain}
\Frst_{\mathfrak{A}}(\bm{z}) =\tfrac{1}{4} \sum_{\alpha\in \mathfrak{A}} \alpha({\bm{z}})^2 \log (\alpha({\bm{z}}))\,,
\end{equation}
where the root systems $\mathfrak{R}_W$ is replaced by a set of covectors $\mathfrak{A}\,.$ The WDVV equations then reduce to algebraic conditions on these vectors known as $\bigvee$-conditions. Clearly any root system of an irreducible finite Coxeter group is a $\bigvee$-system, but the idea is far more general. However, while many classes of $\bigvee$-systems have been constructed, no classification has been found.

In this paper we apply the idea of a $\bigvee$-system to the open WDVV equations, developing algebraic conditions on a set of vectors $\mathfrak{\widetilde{B}}$ so that the function
\begin{equation}
    {\overset{\star}{\Omega}}_\mathfrak{\widetilde{B}}(\bm{p}):=\sum_{\bm{\bm{\widetilde{\beta}}}\in \widetilde{{\mathfrak{B}}}} k_{\widetilde{\beta}}\,\bm{\widetilde{\beta}} (\bm{p})\,\log\bm{\widetilde{\beta}}(\bm{p})\,,\qquad\qquad \bm{p}\in \widetilde{V}\,,
\end{equation}
together with \ref{Fagain} satisfy the open WDVV-equations.

\section{Preliminaries}

\subsection{Frobenius manifolds}
\begin{defn}
    A \emph{(Dubrovin-)Frobenius manifold} with charge (or conformal dimension) $d\in\CC$ is a complex manifold $M$ equipped with:
    \begin{itemize}
    \item an $\Hol_M$-bilinear multiplication $\bullet:\X_M\otimes_{\Hol_M}\X_M\to \X_M$ on the tangent sheaf, called \emph{Frobenius product},
    \item a symmetric and non-degenerate bilinear form $\eta:\X_M\otimes_{\Hol_M}\X_M\to\Hol_M$, called \emph{Frobenius pairing} (or metric),
    \item two holomorphic vector fields $e$ and $E$, respectively called \emph{unity} and \emph{Euler vector field},
\end{itemize}
satisfying:
\begin{enumerate}
    \item [(DFM1)] If ${}^\eta\nabla$ denotes the Levi-Civita connection of $\eta\,$, for any $\hbar\in\CC\,$, the \emph{deformed connection} ${}^\hbar\nabla:={}^\eta\nabla+\hbar\,\bullet$ is flat and torsion-free.
\item [(DFM2)] For any holomorphic vector field $X\,$, the endomorphism $Y\mapsto X\bullet Y$ is self-adjoint with respect to $\eta\,$,
\item [(DFM3)] $e$ is the unity of $\bullet\,$, and it is a flat section of ${}^\eta\nabla\,$, i.e. ${}^\eta\nabla e=0\,$.
\item [(DFM4)] The Euler vector field $E$ satisfies the following:
  \[
        \begin{aligned}
            \Lie_Ee&=-e\,,&&& \Lie_E\bullet&=\bullet\,, &&& \Lie_E\eta&=(2-d)\,\eta\,.
        \end{aligned}
        \]
        Here, $\Lie$ denotes the Lie derivative.
\end{enumerate}
\end{defn}

\begin{oss}
   A Frobenius manifold is said to be semi-simple if $T_pM$ is generically a semi-simple complex algebra for $p\in M\,$. This is an open condition, and there are local coordinates $u_1,\dots, u_n$ -- called \textit{canonical} -- around any such point such that the corresponding coordinate vector fields $\partial_{u_1},\dots, \partial_{u_n}$ are idempotents \cite{Dubrovin1996}. 
\end{oss}

The condition (DFM1) encodes a series of requirements, which can be spelled out as follows: 
\begin{enumerate}
    \item [(DFM1a)] The connection ${}^\eta\nabla$ is flat. We will equivalently say that $\eta$ is a flat metric.
    \item [(DFM1b)] The multiplication $\bullet$ is associative and commutative.
    \item [(DFM1c)] For any fixed one-form $\phi\,$, the tensor field $(X,Y)\mapsto \phi(X\bullet Y)$ is symmetric, and its covariant derivative is also totally symmetric. In other words, it is a Codazzi tensor of rank two \cite{codazzi}.
\end{enumerate}
These conditions allow for a concrete local description of the Frobenius manifold structure in terms of one local function on $M\,$. Firstly, we consider the rank-three tensor field $c$ $\eta$-dual to $\bullet\,$, i.e.:
\begin{equation}
    c(X,Y,Z):=\eta(X\bullet Y\,,\,Z)\,,
\end{equation}
for any three sections $X,Y$ and $Z$ of $\X_M\,$.

Owing to (DFM1b) and (DFM2), $c$ is totally symmetric, and so is its covariant derivative, i.e. it is a Codazzi tensor of rank three. Since the curvature of $\eta$ vanishes, it follows from \cite{codazzi} that there is a local function $\Fr\in\Hol_M(U)$ in a neighbourhood $U$ of each point of $M$ such that:
\begin{equation}
    c\,\lvert_U={}^\eta\nabla^3\Fr\,.
\end{equation}
Explicitly, $c(X,Y,Z)=X(Y(Z(\Fr))-{}^\eta\nabla_XY(Z(\Fr))-(\nabla^2_{X,Y}Z)(\Fr)\,$. The function $\Fr$ is called \emph{prepotential} (or free energy) of the Frobenius manifold, and it is only defined locally up to a function in the kernel of ${}^\eta\nabla^3\,$.

The metric $\eta$ can also be expressed locally in terms of $\Fr\,$:
\begin{equation}\label{eq:etalocal}
    \eta\,\lvert_U={}^\eta\nabla^2e(\Fr)\,.
\end{equation}

Furthermore, from (DFM4), the prepotential needs to satisfy:
\begin{equation}\label{eq:quasihomogeneityF}
    E(\Fr)=(3-d)\,\Fr \quad \mod\, {\rm ~quadratic~terms\,.}
\end{equation}

Since the metric $\eta$ is flat, there are distinguished set of coordinates where the local description of the structure is particularly simple. These are those in which the Christoffel symbols of ${}^\eta\nabla$ vanish or, equivalently, where the Gram matrix that represents $\eta$ is constant. These coordinates are usually referred to as being themselves \emph{flat}.

As a function of the flat coordinates $t_1,\dots,t_n\,$, the prepotential is defined up to quadratic polynomials, and the components of $c$ are its third derivatives. The associativity of the multiplication is an algebraic constraint on the structure constants of $\bullet\,$, therefore a system of third-order PDEs for $\Fr\,$. These are precisely the WDVV equations in \cref{eq:WDVV}.

Since the identity is flat, then we can always choose a system of flat coordinates so that $e$ is one of the coordinate vector fields. Unless otherwise stated, we will make the choice $e=\partial_{t_1}\,$. In such a case, \cref{eq:etalocal} becomes $\eta_{\alpha\beta}=\Fr_{1\alpha\beta}\,$, which is a constant, non-degenerate matrix.

For the Euler vector field, we have:
\begin{lem}[\cite{Dubrovin1999}]\label{lem:EVFaffine}
    Let $M$ be a Frobenius manifold. The Euler vector field $E$ is affine with respect to ${}^\eta\nabla $, i.e. ${}^\eta\nabla ^2 E=0$.
\end{lem}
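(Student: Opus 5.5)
The plan is to prove that $E$ is affine by using the two conditions in (DFM4) that involve the metric and the product: $\Lie_E\eta=(2-d)\eta$ and $\Lie_E\bullet=\bullet$. We work in flat coordinates $t_1,\dots,t_n$ with $e=\partial_{t_1}$. The Gram matrix $\eta_{\alpha\beta}$ is then constant, the Christoffel symbols vanish, and ${}^\eta\nabla^2E=0$ reduces to the statement that the components $E^\alpha(t)$ are affine-linear functions of $t$, i.e. $\partial_\beta\partial_\gamma E^\alpha=0$.

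\textbf{Step 1: $\nabla E$ is conformal.} Since the connection is torsion-free, we have $(\Lie_E\eta)(X,Y)=\eta(\nabla_XE,Y)+\eta(X,\nabla_YE)$. In flat coordinates this reads
\[
\eta_{\alpha\mu}\partial_\beta E^\mu+\eta_{\beta\mu}\partial_\alpha E^\mu=(2-d)\eta_{\alpha\beta}.
\]
Set $A_{\alpha\beta}:=\eta_{\alpha\mu}\partial_\beta E^\mu$, the lowered covariant derivative of $E$. The identity says $A_{\alpha\beta}+A_{\beta\alpha}=(2-d)\eta_{\alpha\beta}$, which is a constant. Differentiate once more with respect to $t_\gamma$ and write $B_{\alpha\beta\gamma}:=\partial_\gamma A_{\alpha\beta}=\eta_{\alpha\mu}\partial_\beta\partial_\gamma E^\mu$. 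Then:
\begin{itemize}
\item $B$ is symmetric in its last two indices $(\beta,\gamma)$, because partial derivatives commute;
\item $B$ is antisymmetric in its first two indices $(\alpha,\beta)$, by the differentiated identity.
\end{itemize}

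\textbf{Step 2: the standard lemma.} A tensor symmetric in one pair of indices and antisymmetric in another vanishes. Alternating the two properties gives
\[
B_{\alpha\beta\gamma}=-B_{\beta\alpha\gamma}=-B_{\beta\gamma\alpha}=B_{\gamma\beta\alpha}=B_{\gamma\alpha\beta}=-B_{\alpha\gamma\beta}=-B_{\alpha\beta\gamma},
\]
so $B=0$. Since $\eta$ is non-degenerate, this yields $\partial_\beta\partial_\gamma E^\mu=0$, which is exactly ${}^\eta\nabla^2E=0$.

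This argument uses only flatness of $\eta$ (DFM1a) and the conformal Killing condition with a \emph{constant} conformal factor $2-d$. Constancy is the crucial point: for a general conformal Killing field the right-hand side would be $f\eta$ with $f$ a function, its derivative would add a term $\partial_\gamma f\,\eta_{\alpha\beta}$, and the lemma would fail.

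\textbf{Main obstacle.} The only delicate step is justifying the coordinate expression of $\Lie_E\eta$ in Step 1, i.e. checking that in flat coordinates, where $\partial_\alpha\eta_{\beta\gamma}=0$, the Lie derivative reduces to the symmetrised $\nabla E$. This is a routine computation from the formula $\Lie_E\eta_{\alpha\beta}=E^\mu\partial_\mu\eta_{\alpha\beta}+\eta_{\mu\beta}\partial_\alpha E^\mu+\eta_{\alpha\mu}\partial_\beta E^\mu$.

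The argument is purely local, and flat coordinates exist around every point, so the conclusion ${}^\eta\nabla^2E=0$ holds on all of $M$. The condition $\Lie_E\bullet=\bullet$ is not needed.
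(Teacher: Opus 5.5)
Your proof is correct. It is the argument the paper relies on: the paper gives no proof of its own and cites Dubrovin, and its later remark reuses ``the same argument'' for $\EE$, which is only known to be a conformal Killing field with constant factor $1-d$ of a flat metric — exactly your route. That route is that flatness plus the constant conformal factor in $\Lie_E\eta=(2-d)\,\eta$ make $\eta_{\alpha\mu}\partial_\beta\partial_\gamma E^\mu$ antisymmetric in $(\alpha,\beta)$ and symmetric in $(\beta,\gamma)$, hence zero, and you are right that $\Lie_E\bullet=\bullet$ is not needed.
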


As a consequence, in a system of flat coordinates, the components of $E$ are at most linear functions. We can always choose the flat coordinates $\bm{v}$ so that the coordinate vector fields are eigenvectors of $X\mapsto{}^\eta\nabla _XE$ \cite{Dubrovin1999}. In other words, if we denote $k:=\rank{}^\eta\nabla  E\,$, $E$ is decomposed in the associated local frame as follows:
\begin{equation}\label{eq:EulerVFflatcoords}
\begin{aligned}
        E&=\sum_{\mu=1}^kd_\mu \,t_\mu\partial_{t_\mu}+\sum_{\mu=k+1}^nr_{\mu-k} \,\partial_{t_\mu}\,,
\end{aligned}
\end{equation}
for some complex numbers $d_1,\dots, d_k\in\CC^\ast\,$, $r_1,\dots, r_{n-k}\in\CC\,$. $\Lie_Ee=-e$ finally imposes $d_1=1\,$.

As a consequence, \cref{eq:quasihomogeneityF} says that $\Fr$ is a weighted-homogeneous function with weights $(d_1,\dots, d_k\,;\,r_1,\dots,r_{n-k}\,;\,3-d)$ of the flat coordinates $\bm{v}\,$, by which we mean:
\begin{equation}
    \Fr(c^{d_1}\,t_1,\dots, c^{d_k}\,t_k,t_{k+1}+c\,r_1,\dots,t_n+c\,r_{n-k})=c^{3-d}\,\Fr(\bm{t})\,,\qquad \forall c\in\CC^\ast\,.
\end{equation}

As anticipated, this gives a one-to-one correspondence between Frobenius manifold structures and weighted-homogeneous solutions to the WDVV equations.

\subsection{Dubrovin almost duality}
It was shown in \cite{Dub04} that, to any Frobenius manifold, one can associate a \textit{generalised} Frobenius structure, which satisfies the same axioms except for flatness of the unity and existence of an Euler vector field. For this reason, we define:
\begin{defn}
    A \textit{conformal} (or dual-type) \textit{Riemannian F-manifold} of charge $d\in\CC$ is a complex manifold $M$ equipped with:
    \begin{itemize}
        \item An $\Hol_M$ bilinear multiplication $\ast:\X_M\otimes_{\Hol_M}\X_M\to\X_M$ on the tangent sheaf,
        \item A symmetric, non-degenerate bilinear form $g:\X_M\otimes_{\Hol_M}\X_M\to\Hol_M\,$,
        \item A distinguished holomorphic vector field $\EE\in\X_M(M)\,$,
    \end{itemize}
    satisfying:
    \begin{enumerate}
        \item [(DRFM1)] For any $\hbar\in\CC\,$, the deformed connection ${}^\hbar\nabla:={}^g\nabla+\hbar\,\ast$ is flat and torsion-free.
        \item [(DRFM2)] For any holomorphic vector field $X\,$, the endomorphism $X\ast\in \Gamma(\End TM)$ is self-adjoint with respect to $g\,$.
        \item [(DRFM3)] $\EE$ is the identity of $\ast\,$, and it is a conformal Killing vector field for $g\,$, with conformal factor $1-d\,$, i.e.:
        \[
        \Lie_\EE g=(1-d)\,g\,.
        \]
    \end{enumerate}
\end{defn}
\begin{oss}
    Since the deformed connection to a dual-type Riemannian F-manifold is still flat and torsion-free, it follows that one can still uniquely associate to any such structure a solution to the WDVV equations. This solution will, however, in general fail to be weighted-homogeneous. We shall refer to such solutions as \textit{dual-type solutions}, and will typically denote them by $\Frst\,$.
\end{oss}
\begin{oss}
    Since $\EE$ is a conformal Killing vector field with conformal factor for $g\,$, it follows that the identity is an affine vector field for ${}^g\nabla\,$, with the same argument as in \cref{lem:EVFaffine}.
\end{oss}

In the case of a Frobenius manifold, as mentioned above, the associated dual-type Riemannian F-manifold structure is constructed as follows. The reader is referred to \cite{Dub04} for the details.
\begin{itemize}
\item The complex manifold upon which the structure is defined is the complement in the Frobenius manifold $M$ of the following closed subset, called \emph{discriminant}:
\begin{equation}\label{eq:discriminant}
        \Delta_M:=\bigl\{\,p\in M: \quad E_p\text{ does not admit an inverse with respect to }\bullet\,\bigr\}\,.
\end{equation}
    \item The product $\ast$ -- called \emph{dual product} -- is defined as follows:
    \begin{equation}\label{eq:dualproduct}
        X\ast Y:=E^{-1}\bullet X\bullet Y\,,
    \end{equation}
    for any two holomorphic vector fields on $M\smallsetminus\Delta_M\,$.
    \item The metric $g $ -- called \emph{intersection form} -- is defined on the cotangent bundle as follows:
    \begin{equation}\label{eq:cointform}
            g^\sharp(\alpha,\beta):=E\intprod (\alpha\bullet \beta)\,,
    \end{equation}
    for any two one-forms $\alpha$ and $\beta\,$, where $\bullet$ is the product on $\Omega_M^1$ induced by identifying $TM$ with $T^*M$ via the musical isomorphism associated to $\eta\,$. The cometric $g^\sharp$ is non-degenerate on the complement of the discriminant, and therefore defines a metric $g$ on the tangent bundle of $M\smallsetminus\Delta_M\,$.
\end{itemize}
\begin{thm}[\cite{Dub04}]
    Let $M$ be a Frobenius manifold with charge $d\,$. The dual product \cref{eq:dualproduct} and the intersection form, defined as the inverse of the cometric in \cref{eq:cointform} where non-degenerate, endow the complement of the discriminant \cref{eq:discriminant} with the structure of a dual-type Riemannian F-manifold of charge $d $ and unity $E\,$.
\end{thm}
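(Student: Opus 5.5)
We verify the axioms (DRFM1)--(DRFM3) in turn, working throughout in flat coordinates $t_1,\dots,t_n$ of $\eta$ with $e=\partial_{t_1}$ and $E$ in the form \cref{eq:EulerVFflatcoords}. On $M\smallsetminus\Delta_M$ the operator $E\bullet$ is invertible, so $\ast$ in \cref{eq:dualproduct} is well defined.

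\textbf{Algebraic axioms.} Commutativity and associativity of $\ast$ follow at once from those of $\bullet$. Indeed,
\[
(X\ast Y)\ast Z=E^{-2}\bullet X\bullet Y\bullet Z,
\]
and this expression is symmetric in $X,Y,Z$. The unit is $E$, since $E\ast X=E^{-1}\bullet E\bullet X=X$.

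\textbf{The two metrics.} The key algebraic identity relates $g$ to $\eta$. Let $\flat$ denote lowering indices with $\eta$. From \cref{eq:cointform},
\[
g^\sharp(\alpha,\beta)=\eta(E\bullet\alpha^\sharp,\beta^\sharp).
\]
Hence $g(X,Y)=\eta(E^{-1}\bullet X,Y)$, which is symmetric by (DFM2). Self-adjointness (DRFM2) is then immediate:
\[
g(X\ast Y,Z)=\eta(E^{-2}\bullet X\bullet Y,Z),
\]
which is totally symmetric in $X,Y,Z$.

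\textbf{Conformal Killing property.} To get (DRFM3), write
\[
g^{\alpha\beta}=E^\varepsilon c_\varepsilon^{\alpha\beta}=E^\varepsilon\eta^{\alpha\mu}\eta^{\beta\nu}\Fr_{\varepsilon\mu\nu}.
\]
Now use the quasi-homogeneity $\Lie_E\bullet=\bullet$ and $\Lie_E\eta=(2-d)\eta$, together with $\Lie_EE=0$. This gives $\Lie_E g^\sharp=(d-1)g^\sharp$, that is, $\Lie_Eg=(1-d)g$.

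\textbf{Flatness (DRFM1) — the main obstacle.} Two things must be shown: that $g$ is flat, and that ${}^g\nabla+\hbar\ast$ is flat for all $\hbar$. I would follow Dubrovin's original route.

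First, the contravariant Christoffel symbols of $g^\sharp$ in $\eta$-flat coordinates are
\[
\Gamma^{\alpha\beta}_\gamma=\Bigl(\tfrac{d-1}{2}+\nabla E\Bigr)^{\!\beta}_{\;\varepsilon}\,c^{\alpha\varepsilon}_\gamma .
\]
Flatness of $g$ is the statement that $g^\sharp$ and $\eta^\sharp$ form a flat pencil. This follows by checking the Dubrovin–Novikov compatibility conditions. They reduce to the WDVV equations \eqref{eq:WDVV} together with the quasi-homogeneity of $\Fr$ and the affinity of $E$ from \cref{lem:EVFaffine}.

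Second, I would exhibit flat coordinates $z$ of $g$ as the flat sections of the deformed connection ${}^\hbar\nabla$ at the special value $\hbar$ fixed by the conformal dimension. These are the twisted periods.

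Third, I would show that in the $z$-coordinates the tensor $g(X\ast Y,Z)$ equals ${}^g\nabla^3\Frst$ for a local function $\Frst$. Since $\ast$ is associative, this function satisfies WDVV, and the Codazzi criterion of \cite{codazzi} then yields flatness of ${}^g\nabla+\hbar\ast$.

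The potential property is the delicate part. It requires showing that ${}^g\nabla(E^{-1}\bullet)$ is symmetric. My plan is to compute ${}^g\nabla-{}^\eta\nabla$ explicitly as a $(1,2)$-tensor built from $\bullet$, $E^{-1}$ and $\nabla E$, using the identity
\[
{}^g\nabla_XY={}^\eta\nabla_XY-\tfrac{1-d}{2}\,E^{-1}\bullet X\bullet Y+\dots
\]
The resulting symmetry conditions then reduce to WDVV, (DFM1c) and $\Lie_E\bullet=\bullet$.
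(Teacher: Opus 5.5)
The paper does not prove this statement; it quotes it from \cite{Dub04}. Your outline is the standard route, and most of it is right: the algebraic properties of $\ast$, the identity $g(X,Y)=\eta(E^{-1}\bullet X,Y)$, (DRFM2), $\Lie_Eg^\sharp=(d-1)g^\sharp$, and your contravariant Christoffel symbols. Flatness of $g$ does follow from these symbols via WDVV, symmetry of the fourth derivatives of $\Fr$, and constancy of ${}^\eta\nabla E$.

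The gap is that the decisive step, potentiality of $\ast$ for ${}^g\nabla$, is announced rather than carried out, and what you say about it is wrong in two places.

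First, the condition is not symmetry of ${}^g\nabla(E^{-1}\bullet)$. Since ${}^g\nabla g=0$ and $g(X\ast Y,Z)$ is totally symmetric, what you need is $({}^g\nabla_X\ast)(Y,Z)=({}^g\nabla_Y\ast)(X,Z)$. Together with commutativity, associativity and $R^g=0$, this gives (DRFM1) directly, because $R^\hbar(X,Y)=R^g(X,Y)+\hbar\bigl(({}^g\nabla_X\ast)(Y,\cdot)-({}^g\nabla_Y\ast)(X,\cdot)\bigr)+\hbar^2[X\ast,Y\ast]$. The detour through $\Frst$ is therefore unnecessary.

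Second, lowering your own $\Gamma^{\alpha\beta}_\gamma=-g^{\alpha\sigma}\Gamma^\beta_{\sigma\gamma}$ gives
\[
{}^g\nabla_XY={}^\eta\nabla_XY+\tfrac{1-d}{2}\,X\ast Y-{}^\eta\nabla_{X\ast Y}E\,,
\]
so your $\tfrac{1-d}{2}$-term has the wrong sign. For $A_1$, with $\Fr=t^3/6$, $E=t\partial_t$, $d=0$ and $g=dt^2/t$, one has $\Gamma^t_{tt}=-\tfrac{1}{2t}=\tfrac{1}{2t}-\tfrac{1}{t}$.

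Your second step is false and should be dropped. Flat coordinates of ${}^\eta\nabla+\hbar\bullet$ at a fixed $\hbar$ are not $g$-flat: for $A_1$ they are $a\,e^{\hbar t}+b$, while the flat coordinate of $g$ is $2\sqrt{t}$, which arises only after an integral transform in $\hbar$. Nothing is lost, since flatness of $g$ already provides flat coordinates.

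With the corrected connection the check closes using exactly the ingredients you name. Let $A={}^\eta\nabla E$ and $N(X,Y,Z)=({}^\eta\nabla_X\bullet)(Y,Z)$, which is totally symmetric by (DFM1c). Three identities are needed:
\begin{itemize}
\item differentiating associativity gives $N(X,U\bullet V,W)+N(X,U,V)\bullet W=N(X,U,V\bullet W)+U\bullet N(X,V,W)$;
\item flatness of the unit $e$ gives $N(X,e,\cdot)=0$;
\item $\Lie_E\bullet=\bullet$ reads $N(E,Y,Z)=Y\bullet Z+A(Y\bullet Z)-A(Y)\bullet Z-Y\bullet A(Z)$.
\end{itemize}
These, together with ${}^\eta\nabla_X(E\bullet E^{-1})=0$ and the first identity applied with $U=E^{-1}$, $V=E$, determine ${}^\eta\nabla_XE^{-1}$ and $N(X,E^{-1},\cdot)$ explicitly. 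One obtains
\[
({}^\eta\nabla_X\ast)(Y,Z)=E^{-1}\bullet N(X,Y,Z)-X\ast Y\ast Z-E^{-1}\bullet A(E^{-1}\bullet X\bullet Y\bullet Z)+X\ast A(Y\ast Z)\,,
\]
and hence, adding the difference tensor,
\[
\begin{aligned}
({}^g\nabla_X\ast)(Y,Z)={}&E^{-1}\bullet N(X,Y,Z)-\tfrac{3-d}{2}\,X\ast Y\ast Z-E^{-1}\bullet A(E^{-1}\bullet X\bullet Y\bullet Z)\\
&-A(X\ast Y\ast Z)+A(X\ast Y)\ast Z+X\ast A(Y\ast Z)+Y\ast A(X\ast Z)\,.
\end{aligned}
\]
This is manifestly symmetric in $X,Y$. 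The scalar part of the difference tensor contributes only a multiple of $X\ast Y\ast Z$, so your sign slip happens not to affect this symmetry. Nonetheless, the connection you wrote down is not the Levi-Civita connection of $g$.
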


\subsection{$\bigvee$-systems}
The notion of a $\bigvee$-system was introduced by A. Veselov in \cite{veselovcech}. These are collections of covectors on a fixed finite-dimensional real vector space $V\,$, satisfying some conditions, each producing a solution to the WDVV equations of dual type. The key observation under this construction was that the dual-type solutions associated to the Saito Frobenius manifolds on orbit spaces of Weyl groups \cite{Duborbitspaces} could all be written in a closed form that depends on the corresponding root system. Generalising this construction, then, one starts from a given set of covectors -- which determine a hyperplane arrangement in the corresponding vector space -- and associates to it a function of the same form as the ones coming from root systems. The $\bigvee$-conditions are requirements on the hyperplane arrangement making sure that such a function is a solution to the WDVV equations. In general, this solution will not be weighted-homogeneous, therefore it is of dual type. Of course, the standard root systems are recovered, but $\bigvee$-systems are a strictly larger class of systems of covectors. The WDVV solutions associated to $\bigvee$-systems in the original work by Veselov are nowadays mostly known as \textit{rational solutions}, as various generalisations have been introduced, see e.g.
\cites{Fei06,FV07,FV08,Fei09,S10,SV14,FV18,GF19,AF21,SS21,FKS24}.\\

To any collection of covectors $\alpha\in\mathfrak{A}\subset V^\star$ one may define define a symmetric pairing $V\times V\rightarrow \mathbb{R}$
\[
\inp{X}{Y} = \tfrac{1}{2h} \sum_{\bm{\alpha}\in\mathfrak{A} }
h_\alpha \, \alpha({\bm{X}}) \alpha({\bm{Y}})
\]
where $h_\alpha$ and $h$ are constants (the constant $h$ is introduced separately for later convenience). We assume that:
\begin{itemize}
    \item this is non-degenerate and hence defines an inner product, or metric, on the space $V\,;$
    \item both $\pm\alpha \in \mathfrak{A}\,$.
\end{itemize}
Since $\pm \alpha\in\mathfrak{A}$ one can introduce a positive subsystem $\mathfrak{A}^{+}$. This inner metric then defines an isomorphism $\phi_\mathfrak{A}: V\rightarrow V^\star\,$, and we define
$\alpha^{\vee} := \phi_{\mathfrak{A}}^{-1}(\alpha)\,.$ We can now define a $\bigvee$-system:

\begin{defn} 
The system $\mathfrak{A}$ is a $\bigvee$-system if, for any two-plane $\Pi\subset V^\star$ containing $\alpha\,,$
\[
\sum_{\beta\in\Pi\cap\mathfrak{A}} h_\beta \,\beta(\alpha^\vee)\, \beta^\vee = \lambda \, \alpha^\vee
\]
where $\lambda\in\RR$ depends on the plane $\Pi$ and the covector $\alpha\,.$
\end{defn}
\noindent Any such a $\bigvee$-systems defines a solution of the WDVV-equations.

\begin{thm}[\cite{veselovcech}]
 Let $V$ be a finite-dimensional real vector space, and $\cech\subseteq V^*$ be a collection of covectors. The function 
 \begin{equation}\label{eq:prepotentialcech}
    \Frst_{\mathfrak{A}}(\bm{v}):=\tfrac{1}{2}\sum_{\bm{\alpha}\in\mathfrak{A}^{+}} h_\alpha \,
    {\alpha}({v}) ^2\,\log {\alpha}({v})\,,\qquad\qquad\bm{v}\in V\,.
\end{equation}
associated to $\cech$ is a (dual-type) solution to the WDVV equations if and only if $\cech$ is a $\bigvee$-system.
 \end{thm}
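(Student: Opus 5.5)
We follow Veselov's original argument: compute the third derivatives of $\Frst_{\mathfrak{A}}$ in flat coordinates for the metric $\inp{\cdot}{\cdot}$, and then reduce the WDVV equations to two-plane conditions.

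\textbf{Third derivatives and the metric.} Differentiating $\tfrac12 h_\alpha\,\alpha(v)^2\log\alpha(v)$ three times gives $h_\alpha\,\alpha(X)\alpha(Y)\alpha(Z)/\alpha(v)$, up to terms cubic in $\alpha$ with constant coefficients that vanish after the third derivative. Hence
\[
c(X,Y,Z)=\sum_{\alpha\in\mathfrak{A}^+} h_\alpha\,\frac{\alpha(X)\alpha(Y)\alpha(Z)}{\alpha(v)}\,.
\]
We take as flat metric the pairing $\inp{X}{Y}$. With the normalisation $h$ chosen appropriately, contracting $c$ with the Euler-type vector $v$ gives $c(v,X,Y)=\sum h_\alpha\,\alpha(X)\alpha(Y)$, which is a multiple of $\inp{X}{Y}$. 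So the position vector field is, up to scaling, the unity of the product
\[
X\ast Y=\sum_{\alpha} h_\alpha\,\frac{\alpha(X)\alpha(Y)}{\alpha(v)}\,\alpha^\vee .
\]
Commutativity and self-adjointness are then automatic, so WDVV is equivalent to associativity of $\ast$, i.e.\ $[X\ast,Y\ast]=0$ for all $X,Y$.

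\textbf{Pole decomposition.} The commutator is
\[
\sum_{\alpha,\beta} h_\alpha h_\beta\,\frac{\beta(\alpha^\vee)}{\alpha(v)\beta(v)}\,B_{\alpha,\beta}(X,Y)\otimes(\dots)\,,
\]
a rational function in $v$ whose only possible singularities are simple poles along the hyperplanes $\alpha(v)=0$. The terms with $\alpha,\beta$ proportional cancel by antisymmetry in $(\alpha,\beta)$. A rational function of degree $-2$ that vanishes at infinity and has no residues is zero, so WDVV holds if and only if the residue at each hyperplane $\alpha(v)=0$ vanishes.

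\textbf{Residue condition.} On $\alpha(v)=0$, the residue collects terms with $\beta$ non-proportional to $\alpha$. We group these by the two-plane $\Pi=\mathrm{span}(\alpha,\beta)$. Restricting $v$ to $\ker\alpha$, the quantities $\beta(v)$ for $\beta\in\Pi$ are all proportional to a single linear function on $\ker\alpha$. Different planes give linearly independent rational dependences (generic $v$), so the residue vanishes if and only if it vanishes plane by plane. For a fixed plane, the residue has the form $(\alpha\wedge\gamma)\otimes\bigl(\sum_{\beta\in\Pi} h_\beta\,\beta(\alpha^\vee)\,\beta\wedge\alpha\bigr)$, suitably dualised. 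Its vanishing says exactly that $\sum_{\beta\in\Pi\cap\mathfrak{A}} h_\beta\,\beta(\alpha^\vee)\,\beta^\vee$ is parallel to $\alpha^\vee$, which is the $\bigvee$-condition. Running the argument backwards gives the converse.

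\textbf{Main obstacle.} The delicate step is the plane-by-plane separation. This means justifying that the contributions from distinct planes $\Pi$ through $\alpha$ cannot cancel one another in the residue, and carefully handling the terms where $\beta\in\Pi$ satisfies $\beta(\alpha^\vee)=0$. I would first test for independence by evaluating the residue on vectors $X,Y$ adapted to each plane, i.e.\ with $X\in\ker\alpha$ and $Y$ annihilated by all covectors in $\Pi$ except one.
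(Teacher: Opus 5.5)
Your argument is correct and is essentially Veselov's original proof. The paper does not reproduce that proof and only cites it, but the same residue-along-hyperplanes bookkeeping is what it uses in Section~\ref{mainresults} to derive the open conditions.

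The step you flag as the main obstacle is already settled by your own observation. For distinct planes $\Pi\ni\alpha$ the restrictions $\gamma_\Pi|_{H_\alpha}$ are pairwise non-proportional, so the functions $1/\gamma_\Pi(v)$ on $H_\alpha$ are linearly independent over constant tensors (take the residue along $\gamma_\Pi=0$ inside $H_\alpha$). Terms with $\beta(\alpha^\vee)=0$ simply contribute nothing, and no special choice of $X,Y$ is needed. In fact the adapted $Y$ you describe cannot exist once $\Pi\cap\mathfrak{A}$ contains three or more pairwise non-proportional covectors: $Y$ would then be annihilated by two independent covectors of $\Pi$, hence by all of $\Pi$.
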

 \begin{ex}[Coxeter root systems]
 As anticipated, the root system $\mathfrak{R}_W$ of a finite irreducible Coxeter group $W$ is a $\bigvee$-system since, if
 $\mathfrak{A}=\mathfrak{R}_W\,,$ the solution \cref{eq:prepotentialcech} is known to be the dual solutions to the Saito-Frobenius manifold on orbit space $\mathbb{C}^N/W\,.$
\end{ex}
Such $\bigvee$-systems have many interesting properties, for example, a subsystem of a $\bigvee$-system is also a $\bigvee$-system, and the restriction of a $\bigvee$-system $\mathfrak{A}$ to a subspace $\mathfrak{A}_s \subset \mathfrak{A}$ is also a $\bigvee$-system, results that do not hold for Coxeter root systems.

\subsection{Extensions of Frobenius manifolds}\label{opendef}
For a number of more modern applications, the notion of a fully-fledged Frobenius manifold has often proven to be too rich. Various generalisations have arisen over the years. In particular, we are here interested in the following:
\begin{defn}
    An \emph{F-manifold with (compatible) flat connection} is a complex manifold $M$ equipped with 
    \begin{itemize}
\item An $\Hol_M$-bilinear multiplication $\bullet:\X_M\otimes_{\Hol_M}\X_M\to\X_M$ on the tangent sheaf,
\item An affine connection $\nabla:\X_M\to\Omega_M^1\otimes_{\Hol_M}\X_M\,$,
        \item A distinguished holomorphic vector field $e\in\X_M(M)\,$,
        \end{itemize}
    satisfying:
    \begin{enumerate}
           \item [(FMC1)] For any $\hbar\in\CC\,$, the \emph{deformed connection} ${}^\hbar\nabla:=\nabla+\hbar\,\bullet$ is flat and torsion-free.
        \item [(FMC2)] At any $p\in M\,$, $e_p$ is the identity of $\bullet_p\,$.
    \end{enumerate}
        If, further, $\nabla e=0\,$, we will say that $M$ is \emph{normalised}.
\end{defn}
\begin{oss}
    In other words, in an F-manifold with flat connection, the connection needs not be metric, and there is no canonical notion of an Euler vector field. In particular, Frobenius manifolds are normalised F-manifolds with connection, $\nabla$ being the Levi-Civita connection of $\eta\,$.
\end{oss}
\begin{oss}
    Since, it this work, we shall always assume that $\nabla$ is flat, we shall drop the adjective \emph{flat} in F-manifold with flat connection.
\end{oss}
\begin{oss}
Normalised F-manifolds with connection are usually called \emph{flat F-manifolds}.
\end{oss}
\begin{oss}
    A dual-type Riemannian F-manifold is an example of a non-normalised F-manifold with connection.
\end{oss}
Since the tangent sheaf to an F-manifold with connection $M$ is still generated, as an $\Hol_M$-module, by flat sections of $\nabla\,$, then distinguished systems of flat coordinates can still be defined around every point. Furthermore, (FMC1) still implies a weaker notion of potentiality for the multiplication, namely there exists a local holomorphic \textit{vector field} $\Fr$ in a neighbourhood of each point -- called \textit{vector (pre)potential} -- such that:
\begin{equation}
    X\bullet Y=\nabla^2_{X,Y}\Fr\,,
\end{equation}
for any pair of local holomorphic vector fields $X$ and $Y\,$. Clearly, a vector potential is only determined up to a vector field in $\ker\nabla^2\,$. In a flat frame, this is a vector field whose components are linear functions of the flat coordinates.

Since the connection is flat, this form automatically ensures commutativity of the multiplication, whereas associativity is a non-trivial requirement. In a system of flat coordinates for $\nabla\,$, this is a system of second order PDEs for the coordinate functions of $\Fr\,$, which generalise the ordinary WDVV equations, called \emph{oriented WDVV equations} \cite{maninflatfmanifolds}. Due to their algebraic nature in this geometric setting, WDVV equations and their generalisations are often referred to as \textit{associativity equations}. \\

Building on our previous work \cite{openWDVVduality}, in this paper we are interested in the following construction of F-manifolds with connection:
\begin{defn}
    Let $M$ be an F-manifold with connection. An \emph{extension of $M$} is an F-manifold with connection $\widetilde{M}$ with a fibration $\pi:\widetilde{M}\to M$ and an Ehresmann connection $\vartheta:\pi^*(TM)\to T\widetilde{M}$ such that:
    \begin{enumerate}
        \item [(EFC1)] for any $p\in \widetilde{M}\,$, $(\pi_*)_p:T_p\widetilde{M}\to T_{\pi(p)}M$ is a homomorphism of complex algebras;
        \item [(EFC2)] $\widetilde{\nabla}\circ\vartheta=\pi^*\nabla\,$.
    \end{enumerate}

    The codimension of $M$ in $\widetilde{M}$ is called \emph{rank} of the extension, and we shall denote it by $r\in\ZZ_{>0}\,$. 
\end{defn}
\begin{oss}
    We let $M$ be a flat F-manifold as, in the following, we shall crucially consider the extension of the dual structure to a Frobenius manifold, which is in general not normalised.
\end{oss}
For further details, the reader is referred to \cites{alcoladophd,openWDVVduality}. In this work, we shall restrict to the rank-one case. Therefore, it is understood that, when we say \textit{extension of an F-manifold with connection}, we mean rank-one extension. More specifically, we will be interested in the case where the underlying F-manifold with connection is the Dubrovin dual of a Frobenius manifold.

The reason why this construction is interesting, from the potential-theory point of view, is that, if one assumes that the structure on the given F-manifold with connection is known, and looks for extensions, then some of the oriented WDVV equations will automatically be satisfied, whereas the remaining ones constitute a system of PDEs which is significant on its own. In particular, its solutions give generating functions for the open genus-zero Gromow-Witten invariants of a real symplectic manifold, as discussed in \cite{horev2012opengromovwittenwelschingertheoryblowups}. For this reason, this kind of associativity equations is usually called \emph{open WDVV equations}.

In order to see this, we firstly notice the following obvious consequences of (EFC2):
\begin{prop}
    Let $\pi:\widetilde{M}\to M$ be an extension of the F-manifold with connection $M$. If $z_1,\dots, z_n$ is a system of flat coordinates for $M$ on the open set $U$, then there exists an \emph{adapted system of flat coordinates} $z_1,\dots, z_n,x$ for $\widetilde{M}$ on (an open subset of) $\pi^{-1}(U)$.
\end{prop}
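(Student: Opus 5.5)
The plan is to pull the flat coordinates $z_1,\dots,z_n$ back along $\pi$ and then add one more flat function that is transverse to the fibres. Set $\tilde z_i:=\pi^*z_i=z_i\circ\pi$ on $\pi^{-1}(U)$. First I check that each $d\tilde z_i$ is $\widetilde\nabla$-flat. Since $dz_i$ is a flat section of $T^*M$, the pulled-back forms $\pi^*dz_i=d\tilde z_i$ should be flat for $\widetilde\nabla$. To justify this, I would dualise (EFC2), $\widetilde\nabla\circ\vartheta=\pi^*\nabla$. Take a flat frame $\partial_{z_1},\dots,\partial_{z_n}$ of $TM$ on $U$. By (EFC2), the horizontal lifts $\vartheta(\pi^*\partial_{z_i})$ are $\widetilde\nabla$-flat, because $\pi^*\nabla$ kills pulled-back flat sections. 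Since $\widetilde\nabla$ is flat and $\widetilde M$ has one more dimension, I complete these $n$ flat vector fields to a flat frame near any point by adding one more flat vector field $X_0$ that is transverse to their span. This is possible because the flat sections of a flat connection trivialise $T\widetilde M$ locally.

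The main point is to show that in this frame the coframe dual to the lifts, restricted suitably, contains the forms $d\tilde z_i$. Since $\vartheta$ is an Ehresmann connection, $\pi_*\vartheta(\pi^*\partial_{z_i})=\partial_{z_i}$, and hence $d\tilde z_i(\vartheta(\pi^*\partial_{z_j}))=\delta_{ij}$. The remaining value $d\tilde z_i(X_0)$ is not obviously constant. To deal with this, I choose $X_0$ to be vertical: a flat vector field spanning $\ker\pi_*$. Whether a vertical flat field exists is the main obstacle. I would try to obtain it from torsion-freeness. Torsion-freeness gives $[\vartheta(\pi^*\partial_{z_i}),\vartheta(\pi^*\partial_{z_j})]=0$, so the horizontal distribution is integrable. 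Then I take any flat $X_0$ and replace it by $X_0-\sum_i d\tilde z_i(X_0)\,\vartheta(\pi^*\partial_{z_i})$. For this replacement to stay flat, the coefficients $d\tilde z_i(X_0)$ must be constant. I would establish this by computing $\widetilde\nabla(d\tilde z_i)$ through the torsion-free identity
\[
d(d\tilde z_i)(A,B)=(\widetilde\nabla_A d\tilde z_i)(B)-(\widetilde\nabla_B d\tilde z_i)(A)=0 .
\]
I would combine this with the vanishing of $(\widetilde\nabla d\tilde z_i)$ on pairs of horizontal lifts, which follows from (EFC2). Using symmetry, this leaves only the vertical–vertical component, and I expect that component to vanish after applying (EFC1)/(EFC2) to the fibre direction. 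If this does not work, I would instead simply declare adapted coordinates to mean flat coordinates $(y_1,\dots,y_n,x)$ whose first $n$ coordinate fields are the lifts $\vartheta(\pi^*\partial_{z_i})$, and take $x$ arbitrary.

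Once $d\tilde z_1,\dots,d\tilde z_n$ are shown to be flat and linearly independent (independent because $\pi$ is a submersion), I complete them by a flat closed $1$-form $\xi$. Torsion-freeness makes flat $1$-forms closed, so locally $\xi=dx$. Then $(\tilde z_1,\dots,\tilde z_n,x)$ is a flat coordinate system on an open subset of $\pi^{-1}(U)$, adapted in the sense that $\pi$ reads $(z,x)\mapsto z$, which proves the proposition.
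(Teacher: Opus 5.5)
Your reduction is sound up to the step you flag, and that step is a genuine gap. It cannot be closed from the hypotheses you are using.

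With $H_j=\vartheta(\pi^*\partial_{z_j})$ flat and $V$ spanning $\ker\pi_*$, your symmetry argument correctly kills every component of $\widetilde\nabla\,d\tilde z_i$ except $(\widetilde\nabla_V d\tilde z_i)(V)=-d\tilde z_i(\widetilde\nabla_V V)$. So everything hinges on $\widetilde\nabla_V V\in\ker\pi_*$, i.e.\ on the fibres of $\pi$ being autoparallel. Nothing in (FMC1), (FMC2), (EFC1), or (EFC2) read as ``horizontal lifts of flat fields are flat'' forces this. Here is a counterexample:
\begin{itemize}
\item Let $\widetilde M=\CC^2$ with flat coordinates $(t_1,t_2)$ carry the Frobenius structure with prepotential $\tfrac12 t_1^2t_2+\tfrac1{60}t_2^5$. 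Then $\partial_{t_1}$ is the flat unit and $\partial_{t_2}\bullet\partial_{t_2}=t_2^2\,\partial_{t_1}$.
\item Let $M=\CC$ with the trivial connection and $\partial_z\bullet\partial_z=\partial_z$.
\item Set $\pi=t_1+\tfrac12t_2^2$ and $\vartheta(\pi^*\partial_z)=\partial_{t_1}$.
\end{itemize}
Then $d\pi(X\bullet Y)=d\pi(X)\,d\pi(Y)$; only $X=Y=\partial_{t_2}$ needs checking. So (EFC1) holds. The horizontal lift $\partial_{t_1}$ is flat, so (EFC2) holds in your sense. Yet $\widetilde\nabla\,d\pi=dt_2\otimes dt_2\neq0$, so $z\circ\pi$ belongs to no flat coordinate system.

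Your opening step is also misdirected. Dualising ``$\vartheta$ is parallel'' only says that restricting $1$-forms to the horizontal distribution is parallel. Flatness of $\pi^*dz_i$ is instead the dual of ``$\pi_*$ is parallel''. That additionally requires $\ker\pi_*$ to be $\widetilde\nabla$-parallel.

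The paper gives no argument beyond calling the proposition an obvious consequence of (EFC2). That is correct only if (EFC2) is understood as making $\pi$ affine: $\pi^*$ sends $\nabla$-flat $1$-forms to $\widetilde\nabla$-flat ones. Given the horizontal-lift condition, this is equivalent to $\ker\pi_*$ also being $\widetilde\nabla$-parallel. You need to take that as the hypothesis. Then your last paragraph is the whole proof: the $d(z_i\circ\pi)$ are flat and independent, and a flat $1$-form completing them is closed by torsion-freeness, hence locally $dx$. The horizontal-lift machinery is then unnecessary.

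Your fallback of redefining ``adapted'' does not prove the statement. The resulting $y_i$ are not $z_i\circ\pi$: in the example every such $y_1$ is affine in $(t_1,t_2)$, while $z\circ\pi$ is not. Consequently $\pi$ is not $(y,x)\mapsto y$, and $\partial_x$ need not be vertical. You would also lose what the paper builds on this proposition: the multiplication table and the decomposition $\widetilde{\Fr}=\Fr\circ\pi+\Omega$ with $\Omega\in\ker\pi_*$.
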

\begin{prop}
Let $\pi:\widetilde{M}\to M$ be an extension of the F-manifold with connection $M$, let $\Fr$, $\widetilde{\Fr}$ denote the vector prepotentials of $M$ and $\widetilde{M}$ respectively on the open subsets $U$ and $\pi^{-1}(U)$. There exists a vector field $\Omega$ on (a subset of) $\pi^{-1}(U)$ -- called \emph{extended (pre)potential} -- such that:
     \begin{itemize}
         \item $\Omega\in\ker\pi_*$.
         \item $\widetilde{\Fr}=\Fr\circ\pi+\Omega$, up to affine vector fields.
     \end{itemize}
\end{prop}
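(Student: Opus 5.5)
We work in an adapted system of flat coordinates $z_1,\dots,z_n,x$ on (an open subset of) $\pi^{-1}(U)$, as provided by the previous proposition. In these coordinates the flat vector fields $\partial_{z_i}$ and $\partial_x$ of $\widetilde{\nabla}$ trivialise $T\widetilde{M}$, and $\widetilde{\Fr}$ has components $\widetilde{\Fr}^{\,i}$ ($i=1,\dots,n$) and $\widetilde{\Fr}^{\,x}$ with
\[
\partial_a\bullet\partial_b=\partial_a\partial_b\widetilde{\Fr}^{\,i}\,\partial_{z_i}+\partial_a\partial_b\widetilde{\Fr}^{\,x}\,\partial_x\,,
\]
where $a,b$ range over all of $z_1,\dots,z_n,x$. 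Condition (EFC2) means that $\vartheta$ sends flat sections of $\pi^*TM$ to flat sections, so we may take $\vartheta(\partial_{z_i})=\partial_{z_i}+c_i\,\partial_x$ with $c_i$ constant. After a linear change of the fibre coordinate $x\mapsto x-\sum_i c_i z_i$, which keeps the system flat and adapted, we may assume $\vartheta(\partial_{z_i})=\partial_{z_i}$ and $\pi_*\partial_x=0$.

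The key step is to read off (EFC1). Since $\pi_*$ is an algebra homomorphism, $\pi_*(\partial_a\bullet\partial_b)=\pi_*\partial_a\bullet\pi_*\partial_b$. Taking $a=z_j$ and $b=z_k$ gives $\partial_{z_j}\partial_{z_k}\widetilde{\Fr}^{\,i}=\partial_{z_j}\partial_{z_k}\Fr^i\circ\pi$. Taking $a=x$ gives $\pi_*(\partial_x\bullet\partial_b)=0$, that is, $\partial_x\partial_b\widetilde{\Fr}^{\,i}=0$ for every $b$. Hence all second derivatives of $\widetilde{\Fr}^{\,i}-\Fr^i\circ\pi$ vanish. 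The domain can be shrunk to a connected (indeed convex) coordinate chart, so this difference is affine in $(z,x)$.

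Now set $\Omega:=\widetilde{\Fr}^{\,x}\,\partial_x$. Then $\Omega\in\ker\pi_*$. Moreover, the component-wise identity above shows that
\[
\widetilde{\Fr}-\Bigl(\textstyle\sum_i\Fr^i\circ\pi\,\partial_{z_i}+\Omega\Bigr)
\]
has affine components in a flat frame, so it lies in $\ker\widetilde{\nabla}^2$. Here $\Fr\circ\pi$ means the horizontal lift $\vartheta(\pi^*\Fr)$; with our normalisation of $\vartheta$ this is exactly $\sum_i\Fr^i\circ\pi\,\partial_{z_i}$.

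The main obstacle is purely notational: one has to make precise what $\Fr\circ\pi$ means as a vector field on $\widetilde{M}$, namely via $\vartheta$. One must also check that the normalisation $\vartheta(\partial_{z_i})=\partial_{z_i}$ is legitimate. If instead one keeps a general $\vartheta$ with constants $c_i$, the lift picks up an extra $\partial_x$-component $\sum_i c_i\Fr^i\circ\pi$, and this can simply be absorbed into $\Omega$. The statement therefore holds regardless of the normalisation. The remaining computations are routine.
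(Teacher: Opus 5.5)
Your proof is correct. The paper gives no argument for this proposition; it only presents it as one of the ``obvious consequences of (EFC2)''. Your adapted-coordinate computation is therefore the natural way to supply the missing proof, and it follows the same coordinate set-up the paper uses immediately afterwards.

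It does make one thing explicit that the paper's wording obscures. (EFC2) on its own cannot give the statement. It only provides the adapted flat chart and, through $\vartheta$, a meaning for $\Fr\circ\pi$. The real content is the pair of relations
\[
\partial_x\partial_b\widetilde{\Fr}^{\,i}=0\,,\qquad \partial_{z_j}\partial_{z_k}\widetilde{\Fr}^{\,i}=(\partial_{z_j}\partial_{z_k}\Fr^i)\circ\pi\,,
\]
and these come from the homomorphism property (EFC1), exactly as you use it. Without (EFC1), nothing ties $\widetilde{\bullet}$ to $\bullet$.

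Two small remarks. First, concluding that the $\partial_{z_j}$-components of $\vartheta(\partial_{z_i})$ are $\delta_{ij}$ uses $\pi_*\circ\vartheta=\mathrm{id}$; you should state this explicitly. Second, your normalisation $\vartheta(\partial_{z_i})=\partial_{z_i}$ is exactly what makes the horizontal-lift reading of $\Fr\circ\pi$ agree with the coordinate reading $\sum_\alpha\Fr_\alpha\,\partial_{z_\alpha}$. That coordinate reading is the one implicit in the paper's multiplication table \eqref{eq:multtableextension}.
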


Let us, now, fix a system of adapted flat coordinates $z_1,\dots, z_n,x$ on $\widetilde{M}\,$. We are going to employ Greek indices ranging from $1$ to $n\,$. Given the vector potential $\widetilde{\Fr}=\Fr\circ\pi+\Omega$ on the extension in the adapted coordinate system, the multiplication table is then given by:
\begin{equation}\label{eq:multtableextension}
    \begin{aligned}
    \pdv{z_\mu}\,\widetilde{\bullet}\,\pdv{z_\nu} &=\sum_{\alpha=1}^n\pdv[2]{\Fr_\alpha}{z_\mu}{z_\nu}\,\pdv{z_\alpha}+\pdv[2]{\Omega}{z_\mu}{z_\nu}\,\pdv{x}\,,\\
    \pdv{z_\mu}\,\widetilde{\bullet}\,\pdv{x}&=\pdv[2]{\Omega}{z_\mu}{x}\,\pdv{x}\,,\\
    \pdv{x }\,\widetilde{\bullet}\,\pdv{x}&=\pdv[2]{\Omega}{x}\,\pdv{x}\,.
\end{aligned}
\end{equation}
For the sake of simplicity, we shall from now on denote $\partial_x\equiv '$ and $c\indices{^\alpha_{\mu\nu}}:=\partial_\mu\partial_\nu \Fr_\alpha\,$. The associativity equations are, then \cite{alcoladophd}:
\begin{equation}\label{eq:openWDVV}
    \begin{aligned}
        \sum_{\alpha=1}^nc\indices{^\alpha_{\mu\nu}}\pdv[2]{\Omega}{z_\alpha}{z_\rho}+\pdv{\Omega'}{z_\rho}\pdv[2]{\Omega}{z_\mu}{z_\nu}&=\sum_{\alpha=1}^nc\indices{^\alpha_{\nu\rho}}\pdv[2]{\Omega}{z_\alpha}{z_\mu}+\pdv{\Omega'}{z_\mu}\pdv[2]{\Omega}{z_\nu}{z_\rho}\,,\\
        \sum_{\alpha=1}^nc\indices{^\alpha_{\mu\nu}}\pdv{\Omega'}{z_\alpha}+\Omega''\pdv[2]{\Omega}{z_\mu}{z_\nu}&=\pdv{\Omega'}{z_\mu}\pdv{\Omega'}{z_\nu}\,.
    \end{aligned}
\end{equation}
These are the (rank-one) \emph{open WDVV equations}.
\begin{oss}
    In the rank-one case, fixed an adapted flat coordinate system, the extended prepotential is simply one local function on $\widetilde{M}\,$.
\end{oss}
\begin{oss}
   In the original work \cites{  horev2012opengromovwittenwelschingertheoryblowups}, where open WDVV equations appeared, and in most of the subsequent papers \cites{BCT1,BBvirasoro,BuryakOpenAD,PSTInttheory,dealmeida2025openhurwitzflatf}, this name is used to refer to the system \ref{eq:openWDVV} when $\Fr$ solves the ordinary -- or, in this context, closed -- WDVV equations. This will be the only case we are also going to consider in this work, even though there is geometrically no reason why the open WDVV equation should be only defined in such case. 
     \end{oss}
    The following remark is crucial in the study of solutions of the open WDVV equations, and it is generalised to arbitrary rank \cite{openWDVVduality}:
\begin{lem}[\cite{alcoladophd}]\label{lem:auxiliaryextension}
    If $\Omega''\neq 0$ and $\Omega$ is a solution to the second set of equations in \cref{eq:openWDVV}, then it also solves the former. In such a case, the extension is called \emph{auxiliary}.
\end{lem}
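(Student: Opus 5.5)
Write $\Omega_\mu:=\partial_{z_\mu}\Omega$, $\Omega_{\mu\nu}:=\partial_{z_\mu}\partial_{z_\nu}\Omega$, and
\[
B_{\mu\nu}:=\sum_\alpha c\indices{^\alpha_{\mu\nu}}\,\Omega'_\alpha+\Omega''\,\Omega_{\mu\nu}-\Omega'_\mu\,\Omega'_\nu\,.
\]
The hypothesis is $B_{\mu\nu}=0$ for all $\mu,\nu$. The plan is to solve this for $\Omega_{\mu\nu}$, which is possible because $\Omega''\neq0$:
\[
\Omega_{\mu\nu}=\frac{\Omega'_\mu\Omega'_\nu-\sum_\alpha c\indices{^\alpha_{\mu\nu}}\Omega'_\alpha}{\Omega''}\,.
\]
Substituting this into the first set of equations turns them into identities that must follow from the closed associativity $\sum_\alpha c\indices{^\alpha_{\mu\nu}}c\indices{^\beta_{\alpha\rho}}=\sum_\alpha c\indices{^\alpha_{\nu\rho}}c\indices{^\beta_{\alpha\mu}}$, from the symmetry of $c$, and from the integrability (symmetry of mixed derivatives) of $\Omega$.

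The cleanest way to organise this is algebraic. Let $\widetilde\bullet$ be the extended multiplication \cref{eq:multtableextension}. It is commutative and, by construction, restricts to the associative $\bullet$ modulo $\partial_x$. The first set of equations is the $\partial_x$-component of $(\partial_\mu\widetilde\bullet\partial_\nu)\widetilde\bullet\partial_\rho=\partial_\mu\widetilde\bullet(\partial_\nu\widetilde\bullet\partial_\rho)$. The second set is the $\partial_x$-component of $(\partial_\mu\widetilde\bullet\partial_\nu)\widetilde\bullet\partial_x=\partial_\mu\widetilde\bullet(\partial_\nu\widetilde\bullet\partial_x)$. All associators involving $\partial_x$ twice are automatic, since $\partial_x$ spans an ideal on which the algebra acts by scalars. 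The key step is therefore to show that associativity on triples $(\partial_\mu,\partial_\nu,\partial_x)$ forces associativity on $(\partial_\mu,\partial_\nu,\partial_\rho)$ whenever $\partial_x\widetilde\bullet\partial_x=\Omega''\partial_x$ with $\Omega''\neq0$.

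For this, set $a:=\partial_\mu$, $b:=\partial_\nu$, $c:=\partial_\rho$. Their associator $A:=(ab)c-a(bc)$ lies in $\mathrm{span}(\partial_x)$, because modulo $\partial_x$ the product is $\bullet$, which is associative. Write $A=\lambda\,\partial_x$; we need $\lambda=0$. Multiply by $\partial_x$: $A\,\widetilde\bullet\,\partial_x=\lambda\,\Omega''\,\partial_x$. On the other hand, associativity of every triple containing $\partial_x$ (the given second set, together with the automatic cases) lets us move $\partial_x$ inside:
\[
((ab)c)\partial_x=(ab)(c\,\partial_x)=(ab)\,\Omega'_\rho\,\partial_x=\Omega'_\rho\,\Omega'_\mu\,\Omega'_\nu\,\partial_x\,,
\]
using that $\partial_x$ spans an ideal on which $a$ acts by the scalar $\Omega'_\mu$. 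The same computation gives $(a(bc))\partial_x=\Omega'_\mu\Omega'_\nu\Omega'_\rho\,\partial_x$. Hence $\lambda\,\Omega''=0$, so $\lambda=0$.

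The main obstacle is to check carefully that associativity for all triples containing one $\partial_x$ really follows from the second set of equations. The triples are $(\partial_\mu,\partial_\nu,\partial_x)$ in the various orders, and commutativity reduces them to a single identity: the $\partial_x$-component of $(\partial_\mu\partial_\nu)\partial_x$ must equal $\Omega'_\mu\Omega'_\nu$. This is exactly the second equation, since $(\partial_\mu\partial_\nu)\partial_x=\sum_\alpha c\indices{^\alpha_{\mu\nu}}\Omega'_\alpha\partial_x+\Omega_{\mu\nu}\Omega''\partial_x$. The triples with two copies of $\partial_x$ are immediate. One also has to make sure that the step ``$(ab)(c\,\partial_x)$'' uses only these one-$\partial_x$ associativities, applied to the element $ab$, which is a linear combination of basis vectors; this holds by linearity.
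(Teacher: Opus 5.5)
Your proof is correct and is essentially the paper's argument recast in algebraic language. Multiplying the associator $A=\lambda\,\partial_x$ by $\partial_x$ corresponds to the paper's step of multiplying the first equation by $\Omega''$, and moving $\partial_x$ inside via the one-$\partial_x$ associativities, including the step $(ab)\,\partial_x=a\,(b\,\partial_x)=\Omega'_\mu\Omega'_\nu\,\partial_x$ that is exactly the second equation, is the paper's substitution; the leftover closed WDVV terms are the ones you dispose of up front by noting that $A$ lies in the ideal spanned by $\partial_x$.
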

\begin{proof}
    Since $\Omega''\neq 0$, we can multiply the first equation in \cref{eq:openWDVV} by $\Omega''$ and use the second one. The terms that do not cancel out give the associativity equations for $\Fr$.
\end{proof}
Auxiliary extensions of F-manifolds with connection were fully classified in \cite{alcoladophd}.

\subsection{Landau-Ginzburg models and associated open WDVV solutions}
Building on the theory of primitive forms for miniversal deformations of hypersurface singularities \cites{Saito81,Sai83}, in \cite[Appendix I]{Dubrovin1996} and \cite{Dubrovin1999}, it is shown that the geometry of a semi-simple Frobenius manifold can be captured in the singularity theory of a family of meromorphic functions, called \emph{Landau-Ginzburg superpotentials}. More precisely, following \cite{Dubrovin1999}, we pose the following:
\begin{defn}
    A \emph{genus-$g$ Landau-Ginzburg model} (or B-model) for a semi-simple Frobenius manifold $M$ is a pair $(\lambda,\omega)$ such that:
    \begin{itemize}
        \item $\lambda:M_g\to \RS$ is a meromorphic function on a line bundle over $M$ whose fibres are isomorphic to an open domain $D_g$ of a closed Riemann surface of genus $g\,$ (possibly depending on the point in $M$), called \emph{Landau-Ginzburg superpotential}.
        \item $\omega$ is a (possibly multi-valued) meromorphic differential on $D_g\,$, called \textit{primary differential} (or primary form).
    \end{itemize}
The pair needs to satisfy:
\begin{enumerate}
        \item [(LG1)] $\lambda$ restricted to each fibre of $M_g$ only has non-degenerate critical points, and its critical values in $D_g$ are canonical coordinates for $M\,$.
        \item [(LG2)] The following expressions for the Frobenius metric $\eta$ on $M$ and the tensor field $c\,$, which is $\eta$-dual to the multiplication, hold:
        \begin{equation} \label{eq:residueformulaeLG}
            \begin{split}
                \eta(X,Y)&=\sum_{x\in\mathrm{Cr}(\lambda)}\Res_x\bigl\{X(\lambda)\,Y(\lambda)\,\tfrac{\omega^2}{\dd\lambda}\bigr\}\,,\\
     c(X,Y,Z)&=\sum_{x\in\mathrm{Cr}(\lambda)}\Res_x\bigl\{X(\lambda)\,Y(\lambda)\,Z(\lambda)\,\tfrac{\omega^2}{\dd\lambda}\bigr\}\,.
            \end{split}
        \end{equation}              
        Here, $\mathrm{Cr}(\lambda)$ is the set of critical point of $\lambda\,$.
        \item [(LG3)] If $n=\dim M\,$, there exist cycles $\gamma_1,\dots, \gamma_n$ in $D_g$ such that a system of flat coordinates for the extended Dubrovin connection (see \cite{Dubrovin1999}) is given by the twisted periods:
            \[
\widetilde{v}_\mu:=\int_{\gamma_\mu}e^{\hbar\,\lambda}\,\omega\,,\qquad \mu=1,\dots, n\,.
        \]
    \end{enumerate}
\end{defn}
\begin{oss}
    In this framework, the Landau-Ginzburg superpotential takes the role of the miniversal deformation, and the primary differential replaces the primary form from Saito theory.
\end{oss}
In other words, $M$ is identified with a subset of a moduli space of meromorphic functions on surfaces of fixed genus. 
\begin{defn}
    Let $g\in\ZZ_{\geq 0}$ and $\bm{n}\in\ZZ_{\geq 0}^k$. The \emph{Hurwitz space} $\Hw_{g\,;\,\bm{n}}$ is the space of equivalence classes of pairs $(C_g,\lambda)\,$, where $C_g$ is a genus-$g$ closed Riemann surface and $\lambda:C_g\to\RS$ is a meromorphic function of $C_g$ with $k$ poles of order $n_1+1,\dots, n_k+1$ respectively. The equivalence relation is defined by the action of $\mathrm{Aut}(C_g)$ on the source curve. In other words, two pairs $(C,\lambda)\,$, $(D,\mu)$ are identified if $C\cong D$ and there exists an automorphism $f$ of $C$ such that $\mu=\lambda\circ f\,$.
\end{defn}
Owing to Riemann's existence Theorem, Hurwitz spaces are irreducible, quasi-projective complex varieties, whose dimension can be easily computed using Riemann-Hurwitz formula:
\begin{equation}
    d_{g,\bm{n}}:=\dim\Hw_{g\,;\,\bm{n}}=2(g-1)+2k+n_1+\dots+n_k\,.
\end{equation}
Hurwitz spaces are naturally equipped with a tautological line bundle, whose fibre is a Riemann surface of the given genus:
\begin{defn}[\cite{RomanoDoubleHurwitz}]
    The \emph{universal curve} $\UC_{g\,;\,\bm{n}}$ over the Hurwitz space $\Hw_{g\,;\,\bm{n}}$ is the fibre bundle over the latter whose fibre over a point $[(C,\lambda)]$ is the curve $C$ with marked points given by the poles of $\lambda\,$.
\end{defn}
The Landau-Ginzburg superpotential is a gluing of branched coverings of $\RS$ parametrised by points in a Hurwitz space -- or a subset of. Hence, it is a meromorphic function defined on (a subset of) the universal curve \cite{RomanoDoubleHurwitz}.


In order for the metric and multiplication in \cref{eq:residueformulaeLG} to define a Frobenius manifold, the primary differential has to satisfy some admissibility conditions in terms of $\lambda\,$. These have been studied in \cite[Lecture 5]{Dubrovin1996}. Primary differentials come in five families, and there as many distinct ones on a given Hurwitz space as its dimension. 

As noted e.g. in \cite{BvG22}, the $\lambda$-admissibility endows the universal curve with additional structure. In particular, it defines an Ehresmann connection on $\UC_{g\,;\,\bm{n}}\to\Hw_{g\,;\,\bm{n}}$ by taking the horizontal distribution to the be the one spanned by $\omega$ itself. This gives a canonical way to lift vector field on the Hurwitz space to the universal curve. Hence, for any vector field $X$, we have a differential operator $\delta_X^\omega:H^0(\UC_{g\,;\,\bm{n}}\,,\,\Hol_{\UC_{g\,;\,\bm{n}}})\to H^0(\UC_{g\,;\,\bm{n}}\,,\,\mathcal{K}_{\UC_{g\,;\,\bm{n}}})\,$. Since we will always be working with a fixed primary differential, from now on, we shall denote $\delta^\omega_X$ simply by $X\,$. This explains the meaning of $X(\lambda)$ in \cref{eq:residueformulaeLG}.

As for the identity and Euler vector field, they are respectively defined to be the generators of the translations and rescalings in the stabiliser of $\infty$ for the action of the group of M\"{o}bius transformations on the target $\RS\,$. In other words, they are the generators of the actions on the superpotential $\lambda\mapsto\lambda+a$ and $\lambda\mapsto b\,\lambda$ for $a\in\CC$ and $b\in\CC^\ast\,$.\\

Finally, for the Dubrovin dual structure, 
we recall that the discriminant coincides with the subset where at least one of the critical values of $\lambda$ vanish \cites{Dubrovin1996,Dub04}.

The intersection form is, then, shown to be given by a similar residue expression:
\begin{equation}
    g(X,Y)=\sum_{x\in\mathrm{Cr}(\lambda)}\Res_x\bigl\{X(\log\lambda)\,Y(\log\lambda)\,\tfrac{\omega^2}{\dd\log\lambda}\bigr\}\equiv\sum_{x\in\mathrm{Cr}(\lambda)}\Res_x\bigl\{\tfrac{1}{\lambda}X(\lambda)Y(\lambda)\tfrac{\omega^2}{\dd\lambda}\bigr\}\,.
\end{equation}

The relation between Landau-Ginzburg models and extensions of F-manifolds with connection has been highlighted in the works \cites{dealmeida2025openhurwitzflatf,openWDVVduality}. In particular, it is shown that there are two open WDVV solutions associated to any Landau-Ginzburg model, which give an extension to the universal curve of the Frobenius manifold and of its Dubrovin dual structure respectively. In particular, the first solution is going to be weighted-homogeneous with respect to an affine vector field on the universal curve, which plays the role of an eventual identity between the two F-manifold structures in the sense of \cites{maninflatfmanifolds,davidstrDubduality}. The second solution will be called of \textit{dual-type}, and it will in general fail to be weighted-homogeneous. This is summarised in the following diagram:
\begin{equation}
       \begin{tikzcd}[every cell/.append style={align=center}]
   \begin{tabular}{c}
         $\Fr$ \\
          \footnotesize Weighted-homogeneous \\\footnotesize WDVV solution.
      \end{tabular} &  \begin{tabular}{c}
         $\Omega$ \\
          \footnotesize Weighted-homogeneous \\\footnotesize open WDVV solution.
      \end{tabular}\arrow[dd, leftrightarrow, "\text{Generalised duality}"]\\
       \begin{tabular}{c}
          $(\lambda,\omega)$\\
         \footnotesize LG model 
        \end{tabular} \arrow[ur,rightsquigarrow, "\text{\cite{dealmeida2025openhurwitzflatf}}"]\arrow[u, rightsquigarrow]\arrow[d, rightsquigarrow]&\\
  \begin{tabular}{c}
         $\Frst$ \\
          \footnotesize Dual WDVV solution.
      \end{tabular} \arrow[uu, bend left=60, leftrightarrow, "\text{\cite{Dub04}}"]  &\begin{tabular}{c}
         $\overset{\star}{\Omega}$ \\
          \footnotesize Dual open WDVV solution.
      \end{tabular} \arrow[lu, leftsquigarrow, "\text{\cite{openWDVVduality}}"] 
    \end{tikzcd}
\end{equation}
More concretely, the two open WDVV solutions $\Omega$ and $\overset{\star}{\Omega}$ are constructed explicitly from the Landau-Ginzburg model as follows:
\begin{thm}[\cites{dealmeida2025openhurwitzflatf,openWDVVduality}]\label{HuwitzTheorem}
    Let $\Hw_{g;\bm{n}}^\omega$ be a Hurwitz Frobenius manifold of dimension $n$ with Landau-Ginzburg superpotential $\lambda$ and primary form $\omega$, $(\Hw_{g;\bm{n}}^{\omega})^\star$ denote its Dubrovin dual. Suppose that there exists a system of flat coordinates for both deformed connections ${}^\eta\nabla+\hbar\,\bullet$ and ${}^g\nabla+\hbar\,\ast$ associated to the Frobenius manifold and its dual respectively, given respectively by twisted periods:
    \[
    \begin{aligned}
        \widetilde{t}_\gamma&:=\int_\gamma e^{\hbar\,\lambda}\,\omega\,,&&& \widetilde{z}_\gamma&:=\int_\gamma \lambda^\hbar\,\omega\,,
    \end{aligned}
    \]
    where $\gamma\in \Lambda^\ast(z)$ as defined in \cite{MR2429320}. 

    There exist two unique F-manifolds with connection $\UC^\omega_{g;\bm{n}}$ and $(\UC^\omega_{g;\bm{n}})^\star$ on (an open subset of) the universal curve $\UC_{g;\bm{n}}$ that extend $\Hw_{g;\bm{n}}^\omega$ and $(\Hw_{g;\bm{n}}^{\omega})^\star$ respectively in such a way that:
\begin{itemize}
    \item if $t_1,\dots, t_n$ and $z_1,\dots, z_n$ are a system of flat coordinates for $\Hw_{g;\bm{n}}^\omega$ and $(\Hw_{g;\bm{n}}^{\omega})^\star$ respectively on an open set $\,U$, then let $x$ be a coordinate on the fibres over the points of $U$ such that $\omega=\dd{x}\,$. Then $t_1,\dots, t_n,x$ and $z_1,\dots,z_n,x$ are adapted systems of flat coordinates on an open subset of $\pi^{-1}(U)$ on the universal curve for the two F-manifold with connections respectively;
    \item in the given adapted system, the extended prepotentials $\Omega$ and $\overset{\star}{\Omega}$ respectively satisfy:
    \[
    \begin{aligned}
            \Omega_x&=\lambda\,,&&& \overset{\star}{\Omega}_x&=\log\lambda\,.
    \end{aligned}
    \]
\end{itemize}
\end{thm}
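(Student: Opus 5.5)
By \cref{lem:auxiliaryextension}, it is enough to produce, in an adapted flat coordinate system, a function $\Omega$ (respectively $\overset{\star}{\Omega}$) with $\Omega''\neq0$ that solves the second set of equations in \cref{eq:openWDVV}. We then define the extended multiplication by \cref{eq:multtableextension}. The plan is therefore: fix flat coordinates $t_1,\dots,t_n$ (respectively $z_1,\dots,z_n$) on an open set $U$, choose the fibre coordinate $x$ with $\omega=\dd x$, and set $\Omega_x:=\lambda$ (respectively $\overset{\star}{\Omega}_x:=\log\lambda$). Here $\partial_\mu$ acts along the horizontal lift defined by $\omega$, so $\partial_\mu x=0$. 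We then verify
\[
\sum_{\alpha}c\indices{^\alpha_{\mu\nu}}\,\partial_\alpha\Omega'+\Omega''\,\partial_\mu\partial_\nu\Omega=\partial_\mu\Omega'\,\partial_\nu\Omega'\,.
\]
Differentiating this identity in $x$ shows that it is implied by the $x$-derivative of the symmetric relation
\[
\partial_\mu\lambda\,\partial_\nu\lambda = c\indices{^\alpha_{\mu\nu}}\,\partial_\alpha\lambda + \lambda'\,\partial_\mu\partial_\nu\Omega\,,
\]
with $\Omega=\int\lambda\,\dd x$. The constant of integration in $x$ is then fixed so that the identity holds exactly.

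The key step is to establish this relation in the Hurwitz space itself. I would show that, for fixed $\mu,\nu$, the meromorphic function $\partial_\mu\lambda\,\partial_\nu\lambda-\sum_\alpha c\indices{^\alpha_{\mu\nu}}\partial_\alpha\lambda$ vanishes at all critical points of $\lambda$. At a critical point $p_i$, $\partial_\mu\lambda(p_i)=\partial_\mu u_i$, and in canonical coordinates the multiplication is diagonal, so $c\indices{^\alpha_{\mu\nu}}\partial_\alpha u_i=\partial_\mu u_i\,\partial_\nu u_i$. Hence the difference is divisible by $\lambda'$ as a function on the fibre, with controlled poles at the poles of $\lambda$ coming from the $\omega$-admissibility conditions. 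Writing the quotient as $\partial_x(\partial_\mu\partial_\nu\Omega)$ and checking, via the residue formulae \cref{eq:residueformulaeLG} and the flatness of $t_\mu$ as periods, that it is indeed a second horizontal derivative of a single function gives the claim. For the dual case, the identical argument applies with $\lambda$ replaced by $\log\lambda$, the product $\bullet$ by $\ast=E^{-1}\bullet$, and canonical coordinates by $\log u_i$. The poles of $\log\lambda$ now include the zeros of $\lambda$.

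Uniqueness follows because adapted coordinates and $\Omega_x$ determine $\Omega$ up to affine terms, which do not change the multiplication.

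The main obstacle is showing that the quotient by $\lambda'$ is exactly $\partial_x\partial_\mu\partial_\nu\Omega$, that is, a globally consistent potential rather than merely a function divisible by $\lambda'$. I would first try to derive this from the twisted-period flat coordinates of (LG3). Differentiating $\int_\gamma e^{\hbar\lambda}\omega$ twice gives the flatness equation of ${}^\hbar\nabla$, and this is precisely the identity above integrated over the cycles $\gamma\in\Lambda^\ast$. Non-degeneracy of the period pairing should then lift the identity from integrated form to pointwise form, up to an exact $x$-derivative, and that exact derivative is exactly the freedom absorbed into $\Omega$.
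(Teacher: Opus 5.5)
Your reduction via \cref{lem:auxiliaryextension} is sound, and so is your critical-point argument. Substituting $\Omega_x=\lambda$, the second equation in \cref{eq:openWDVV} is literally $D_{\mu\nu}:=\partial_\mu\lambda\,\partial_\nu\lambda-c^\alpha_{\mu\nu}\partial_\alpha\lambda=\lambda'\,\partial_\mu\partial_\nu\Omega$, and no $x$-differentiation is involved. Since $D_{\mu\nu}$ vanishes at the non-degenerate critical points, $q_{\mu\nu}:=D_{\mu\nu}/\lambda'$ is regular there.

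However, you then state the goal as $q_{\mu\nu}=\partial_x\partial_\mu\partial_\nu\Omega$, and this is false. For $A_2$, $\lambda=x^3+t_2x+t_1$, one finds $c^1_{22}=-t_2/3$ and $D_{22}=\tfrac{1}{3}\lambda'$, so $q_{22}=\tfrac{1}{3}$ while $\partial_2^2\lambda=0$. The correct potential there is $\Omega=\tfrac{1}{4}x^4+\tfrac{1}{2}t_2x^2+t_1x+\tfrac{1}{6}t_2^2$. What must be proved is $q_{\mu\nu}=\partial_\mu\partial_\nu\Omega$, which splits into two parts:
(a) $\partial_x q_{\mu\nu}=\partial_\mu\partial_\nu\lambda$;
(b) the $x$-independent remainder $q_{\mu\nu}-\partial_\mu\partial_\nu\int^x\lambda\,\omega$ is a Hessian in $t$.

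Your ``constant of integration'' remark skips (b), but (b) does follow from (a). Differentiate $\lambda' q_{\mu\nu}=D_{\mu\nu}$ by $t_\rho$, antisymmetrise in $\rho,\mu$, and use (a). This turns $\lambda'(\partial_\rho q_{\mu\nu}-\partial_\mu q_{\rho\nu})$ into the $x$-derivative of a quantity whose product with $\lambda'$ is the associativity defect of $c$, hence zero. Similarly, for uniqueness note that $\Omega_x$ fixes $\Omega$ only up to an arbitrary function of $t$. It is the equation itself, with $\Omega''=\lambda'\neq0$, that fixes the Hessian of that function.

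The genuine gap is (a). You rightly call it the main obstacle, but your proposed fix, non-degeneracy of the period pairing, does not close it. Flatness of $\widetilde t_\gamma$ plus integration by parts gives exactly $\int_\gamma r_{\mu\nu}\,e^{\hbar\lambda}\omega=0$ for all $\gamma\in\Lambda^*$ and all $\hbar$, where $r_{\mu\nu}:=\partial_\mu\partial_\nu\lambda-\partial_x q_{\mu\nu}$. The pairing then only tells you that $r_{\mu\nu}e^{\hbar\lambda}\omega$ is twisted-exact.

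Such twisted-exact integrands exist even when $r_{\mu\nu}$ is independent of $\hbar$. Since $\lambda'e^{\hbar\lambda}\omega=\hbar^{-1}d(e^{\hbar\lambda})$, a discrepancy $r_{\mu\nu}=a_{\mu\nu}(t)\lambda'$ is invisible to every twisted period. Nor can it be ``absorbed into $\Omega$''. It would force $q_{\mu\nu}-\partial_\mu\partial_\nu\int^x\lambda\,\omega=-a_{\mu\nu}\lambda+f_{\mu\nu}(t)$, which depends on $x$, so no $\Omega$ with $\Omega_x=\lambda$ would exist at all.

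The missing ingredient is explicit control of the singularities of $\partial_\mu\lambda$, $\partial_\mu\partial_\nu\lambda$ and $q_{\mu\nu}$ at the poles of $\lambda$. You also need control at the zeros of $\omega$, where $x$ is not a local coordinate. This control must come from the admissibility of $\omega$ and the choice of flat coordinates, and must be sharp enough to place $r_{\mu\nu}$ in a space on which the twisted-period map is injective.

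For $A_n$ this reduces to a degree count: $\deg r_{\mu\nu}<n=\deg\lambda'$, whereas $g'+\hbar\lambda'g$ with $g\neq0$ has degree at least $n$. For general Hurwitz spaces, however, this is the real content of the statement, and it is absent from your plan. The same issue recurs in the dual case, where $(\log\lambda)'\lambda^\hbar\omega=\hbar^{-1}d(\lambda^\hbar)$.

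The paper itself only quotes this theorem from the cited works, so there is no in-paper argument to compare with.
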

\begin{oss}
    The equalities between $\Omega_x\,$, $\overset{\star}{\Omega}_x$ and $\lambda\,$, $\log\lambda$ respectively are to be understood up to constants. This is a consequence of the fact that $\Omega$ and $\overset{\star}{\Omega}_x$ are only defined up to a linear function of the corresponding flat coordinates.
    \end{oss}
\begin{oss}
    Notice that, if, conversely, one knows a (dual-type) open WDVV solution, then the previous Theorem gives a candidate for a Landau-Ginzburg superpotential for the Frobenius manifold that the open WDVV solution extends. This is the approach we are going to discuss in \cref{sec:superpotentials}.
\end{oss}
Now, in this case, we are particularly interested in the case where the underlying Frobenius manifold is a Saito Frobenius manifold on the orbit space of a Coxeter group. As we have discussed, the dual solution is of the form in \cref{eq:prepotentialcech}, $\cech$ being the corresponding root system. Since these structures all admit a Landau-Ginzburg model, one can therefore associate them with a dual-type solution to the open WDVV equations. Such solution is computed explicitly in type-$A$ in \cite[Section 6.1]{openWDVVduality} and the functional form will be given in the next section (see equation \cref{eq:omegaprepotentialcech}).

In this work, we generalize this and derive geometric conditions that, if satisfied, ensure that to any $\bigvee$-system $\cech\,$, one can associate an \textit{open $\bigvee$-system} $\widetilde{\mathfrak{B}}$ in such a way that a function $\Omega^{\widetilde{\mathfrak{B}}}$  is a dual-type solution to the open WDVV equations, and we geometrically characterise the covectors in the open $\bigvee$-system in terms of the Coxeter geometry of weights and roots.

\section{Open $\bigvee$-systems}\label{mainresults}
Let $V$ be an $n$-dimensional vector space, and $\cech\subseteq V^*$ be a $\bigvee$-system \cite{veselovcech}. We, then, know that there exists an associated dual-type solution $\Frst_\cech $ to the WDVV equations, as defined in \cref{eq:prepotentialcech}. We let $\widetilde{V}$ be a finite-dimensional vector space with a short exact sequence of vector spaces
 \[
        \begin{tikzcd}
                        0\arrow[r ]&\RR\arrow[r] & \widetilde{V}\arrow[r] & V\arrow[r]& 0\,.
        \end{tikzcd}
    \]
We fix a splitting of the sequence, i.e. an isomorphism $\widetilde{V}\cong \RR\oplus V\,$, and write ${\bm{p}}=(\bm{x},{\bm{z}})\in\widetilde{V}$ where $\bm{x}\in\mathbb{R}$ and ${\bm{z}}\in V\,.$

The problem is construction of a system of covectors $\widetilde{\mathfrak{B}} $ in the dual space of $\widetilde{V}$ such that the function:
\begin{equation}\label{eq:omegaprepotentialcech}
    \overset{\star}{\Omega}_{\widetilde{\mathfrak{B}}}(\bm{p}):=\sum_{\bm{\bm{\widetilde{\beta}}}\in \widetilde{{\mathfrak{B}}}} k_{\widetilde{\beta}}\,\bm{\widetilde{\beta}} (\bm{p})\,\log\bm{\widetilde{\beta}}(\bm{p})\,,\qquad\qquad \bm{p}\in \widetilde{V}\,,
\end{equation}
is a solution to the open WDVV equations associated to the solution \cref{eq:prepotentialcech} of the closed WDVV equations. We shall call $\widetilde{\mathfrak{B}}$ an \emph{open $\bigvee$-system} associated to the $\bigvee$-system $\mathfrak{A}$. 

The vectors in $\widetilde{V}$ are characterised by whether they have a zero or non-zero projection onto $\RR\,.$ Here we just consider covectors with a non-zero projection and write, through a redefinition of the constants $k_{\tilde{\beta}}\,$,
\[
\widetilde{\mathfrak{B}}\ni\tilde\beta = (1, - \beta)\,, \qquad \beta\in\mathfrak{B} \subset V^\ast\,,
\]
(the minus sign is for future convenience) with corresponding constants $k_\beta\,.$ For notational convenience the we now just use $\Frst$ and $\overset{\star}{\Omega}$ instead of $\Frst_\cech$ and 
$\overset{\star}{\Omega}_{\widetilde{\mathfrak{B}}}$ and, with a slight abuse of notation, the set $\mathfrak{B}$ will just be referred to as the open $\bigvee$-system. With this, \cref{eq:omegaprepotentialcech} becomes:


\begin{equation}\label{eq:omegaprepotentialcech2}
    \overset{\star}
    {\Omega}_{\widetilde{\mathfrak{B}}}(x,\bm{z}):=\sum_{\bm{\bm{{\beta}}}\in \mathfrak{B}} 
    k_{\beta}\,
        (x-\bm{{\beta}} (\bm{z}))\,
    \log(x-\bm{{\beta}}(\bm{z}))\,.
\end{equation}
It is also natural to assume a \lq centre of mass\rq~condition $\sum_{\beta\in\mathfrak{B}} k_\beta \beta=0\,.$

Notice that, according to \cref{lem:auxiliaryextension}, it is enough to show that $\Omega$ satisfies
\begin{equation}\label{shortform}
\overset{\star}{\Omega}_{xx}
\pdv[2]{{\overset{\star}{\Omega}}}{z_\mu}{z_\nu}-\pdv{\overset{\star}{\Omega}_x}{z_\mu}
\pdv{\overset{\star}{\Omega}_x}{z_\nu}+ 
\sum_{\alpha=1}^nc\indices{^\alpha_{\mu\nu}}
\pdv{\overset{\star}{\Omega}_x}{z_\alpha}=0\,,
\end{equation}
where $c\indices{^\alpha_{\mu\nu}}$ are the structure constants defined by $\Fr$.

Substituting \cref{eq:omegaprepotentialcech2} and \cref{eq:prepotentialcech} into \cref{shortform} yields (after a polarization argument)
\[
\frac{1}{2}\sum_{\beta,\beta_\circ \in \mathfrak{B}} 
k_\beta k_{\beta_\circ} \frac{(\beta_\circ-\beta)_i (\beta_\circ-\beta)_j}{(x-\beta_\circ(\bm{z}))(x-\beta(\bm{z}))} - 
\sum_{\beta_\circ\in\mathfrak{B}} \frac{k_{\beta_\circ}}{(x-\beta_\circ(\bm{z}))}
\sum_{\alpha\in\mathfrak{A}^{+}} \frac{h_\alpha \alpha_i \alpha_j \langle\alpha,\beta_\circ \rangle^{*}}{\alpha({\bm{z})}}=0.
\]
Since there is no double pole this may be written as
\[
\sum_{{\beta_\circ}\in\mathfrak{B}} \frac{k_{\beta_\circ}}{x-\beta_\circ(z)}
\Bigg\{
\sum_{\beta\neq\beta_\circ}
k_\beta 
\frac{(\beta_\circ-\beta)_i (\beta_\circ-\beta)_j}{(\beta-\beta_\circ)(\bm{z})} - 
\sum_{\alpha\in\mathfrak{A}^{+}} \frac{h_\alpha \alpha_i \alpha_j \langle\alpha,\beta_\circ \rangle^{*}}{\alpha({\bm{z})}}
\Bigg\}=0
\]
or, 
\[
\sum_{\beta\neq\beta_\circ}
k_\beta 
\frac{(\beta_\circ-\beta)_i (\beta_\circ-\beta)_j}{(\beta-\beta_\circ)(\bm{z})} - 
\sum_{\alpha\in\mathfrak{A}^{+}} \frac{h_\alpha \alpha_i \alpha_j \langle\alpha,\beta_\circ \rangle^{*}}{\alpha({\bm{z})}}
=0\,,\qquad {\rm for~all~} \beta_\circ\in\mathfrak{B}\,.
\]
Contracting with $z^j$ yields the condition
\begin{eqnarray*}
    \sum_{\beta} k_\beta \, (\beta_\circ - \beta) & = & \sum_{\alpha\in\mathfrak{A}} h_\alpha \langle\alpha,\beta_\circ \rangle^{*} \, \alpha\,,\\
    & = & h \, \beta_\circ
\end{eqnarray*}
and with the centre-of-mass condition we arrive at the normalization condition $\sum_\beta k_\beta = h\,.$

Contracting with $dz^j$ yields a $V^\star$-valued 1-form condition
\begin{equation}\label{keyequation}
\sum_{\beta\in\mathfrak{B}\setminus \{ \beta_\circ\}}
k_\beta \,(\beta_\circ-\beta) \log[ (\beta_\circ-\beta)(\bm{z})] =
\sum_{\alpha\in\mathfrak{A}^{+}} h_\alpha \, \,\alpha \langle\alpha,\beta_\circ \rangle^{*}
\log[ \alpha({\bm{z}})]
\end{equation}
for all $\beta_\circ\in\mathfrak{B}\,.$ Note, the sum on the left-hand-side can equally well be described in terms of the differences $\beta_\circ-\beta\,.$

To analyse this equation we use the following:

\begin{itemize}

    \item[(i)] Suppose $\alpha$ is a non-zero covector. This defines a hyperplane
    \[
    H_\alpha = \{ \bm{z} \in V \,|\, \alpha({\bm{z}})=0 \}\,.
    \]
    Since proportional covectors define the same hyperplane it is useful to define an equivalence relation
    \[
    \alpha \sim \alpha^\prime {\iff} H_\alpha=H_{\alpha^\prime}
    \]
    and hence all covectors in the equivalence class $[\alpha]$ define the same hyperplane.
    \item[(ii)] Given an $n$-form $\omega$ with a simple pole along a hyperplane $H_\alpha\,,$ the residue (an $(n-1)$-form) is defined by the expansion
    \[
    \omega = \frac{1}{2\pi i} {\rm res}_{H_\alpha}(\omega) \wedge d \log\alpha({\bm{z}}) + \eta
    \]
    where $\eta$ is holomorphic along the hyperplane.
\end{itemize}

\noindent For a fixed $\beta_\circ$ the right-hand-side of \cref{keyequation} has poles along the hyperplanes defined by the set of covectors
\[
\mathfrak{A}^{+}_{\beta_\circ} = \{ \alpha\in\mathfrak{A}^{+} \,|\, \langle\alpha,\beta_\circ \rangle^{*} \neq 0\}\,.
\]
The left-hand-side requires more care as different differences could define the same hyperplane (see Figure \cref{differentdifferences}), and this hyperplane may, or may not, match a hyperplane appearing on the right-hand-side. 

\begin{figure}
\begin{tikzcd}
\draw[thick] (0,0) - - (-3,1);
\draw[thick,->] (0,0) -- (-1.5,0.5) node[anchor=north east]{\beta_2};
\draw[thick] (0,0) - - (+3,1);
\draw[thick,->] (0,0) -- (+1.5,0.5) node[anchor=north east]{{\hskip 3mm}\beta_1};
\draw[thick] (-3,1) -- (3,1);
\draw[thick,->] (-3,1) -- (-1.5,1) node[anchor=south]{\alpha};
\draw[thick,->] (-3,1) -- (-1.4,1);
\draw[thick,->] (0,1) -- (1.5,1) node[anchor=south]{\alpha};
\draw[thick,->] (0,1) -- (1.4,1);
\draw[thick,->] (0,0) -- (0,0.6) node[anchor=west]{\beta_\circ};
\draw[thick] (0,0) -- (0,1);

\end{tikzcd}
\caption{\label{differentdifferences}}{Different differences $(\beta_\circ-\beta_1\neq \beta_\circ-\beta_2)$ defining the same hyperplane $H_\alpha$}
\end{figure}

To keep track of these different types of structures we decompose, for each $\beta_\circ$, the set of differences $\{\beta - \beta_\circ\}$ into disjoint sets
\begin{eqnarray*}
    \mathfrak{B}_{\beta_\circ}^{\bullet} &=&
    \{ \beta_\circ - \beta \,|\, \exists \alpha\in \mathfrak{A}^{+}_{\beta_\circ} \quad \text{s.t.}\quad \beta_\circ - \beta\sim \alpha \}\,,\\
    \mathfrak{B}_{\beta_\circ}^{\circ} &=&
    \{ \beta_\circ - \beta \,|\, \nexists \alpha\in \mathfrak{A}^{+}_{\beta_\circ} \quad \text{s.t.} \quad\beta_\circ - \beta \sim \alpha \}\,
\end{eqnarray*}
depending on whether the difference defines a hyperplane appearing on the right-hand-side or not.

For \cref{keyequation} to hold, the residues along the hyperplanes appearing on the right-hand-side (which are defined by the set $\mathfrak{A}^{+}_{\beta_\circ}$), must equal the residues along the hyperplanes on the left-hand-side defined by the set $\mathfrak{B}_{\beta_\circ}^{\bullet}$, with the remaining residues along the hyperplanes defined by $\mathfrak{B}_{\beta_\circ}^{\circ}$ being zero. This may be expressed succinctly in terms of a bijection
\[
\rho_{\beta_\circ}\,: \mathfrak{A}^{+}_{\beta_\circ} \longrightarrow 
\displaystyle{{\mathfrak{B}^{\bullet}_{\beta_\circ}}\slash{\sim}}\,.
\]
This maps a covectors $\alpha\in\mathfrak{A}^{+}_{\beta_\circ}$ to the equivalence class $[\beta_\circ-\beta]$ all elements of which define the same hyperplane $H_\alpha\,.$ Since the number of hyperplanes with non-zero residue appearing on each side of \cref{keyequation} must be equal,
\[
\vert \mathfrak{A}^{+}_{\beta_\circ} \vert =
\left\vert 
{{\mathfrak{B}^{\bullet}_{\beta_\circ}}\slash {\sim}}\right\vert\,.
\]
With this, \cref{keyequation} holds if and only if, for all $\beta_\circ\in\mathfrak{B}$\,:
\[
 \sum_{v \in \rho_{\beta_\circ}(\alpha)} k_v \, v = h_\alpha \langle \alpha, \beta_\circ \rangle \, \alpha 
\]
and
\[
\sum_{\beta\in \mathfrak{B}^{\circ}_{\beta_\circ}\slash\sim} k_\beta (\beta-\beta_\circ) = 0\,,
\]
these coming from the residues along the two types of hyperplanes. We thus arrive at the following:

\medskip

\begin{thmdef}\label{mainresult}
Let $\cech$ be a $\bigvee$-system and $\widetilde{\mathfrak{B}}$ be a collection of covectors in $\widetilde{V}\,$, and $\mathfrak{B}\subseteq V^\ast$ be the projection onto $V^\ast$ of the covectors in $\widetilde{\mathfrak{B}}$ having non-zero component in the extended direction. The functions $\Frst_\cech$ and $\overset{\star}{\Omega}_{\widetilde{\mathfrak{B}}}$ in \cref{eq:prepotentialcech} and \cref{eq:omegaprepotentialcech2} respectively satisfy the open WDVV equations if and only if the set $\{k_\beta, \beta\in\mathfrak{B}\}$ satisfies the following conditions for all $\beta_\circ\in\mathfrak{B}\,$:
\begin{itemize}
\item[(i)] The map
\begin{equation}\label{conditionA}
\rho_{\beta_\circ} \,:\, \mathfrak{A}^{+}_{\beta_\circ} \longrightarrow \displaystyle{{\mathfrak{B}^{\bullet}_{\beta_\circ}}\slash{\sim}}
\end{equation}
is a bijection;

\item[(ii)] For all $\alpha\in\mathfrak{A}^{+}\,,$ the residues along the hyperplane $H_\alpha$ satisfy the condition
\begin{equation}\label{conditionB}
 \sum_{v \in \rho_{\beta_\circ}(\alpha)} k_v \, v = h_\alpha \langle \alpha, \beta_\circ \rangle \, \alpha\,; 
\end{equation}

\medskip

\item[(iii)] The residues along hyperplanes defined by elements of $\mathfrak{B}^{\circ}_{\beta_\circ}\slash\sim$ are zero:
\begin{equation}\label{conditionC}
\sum_{\beta\in \mathfrak{B}^{\circ}_{\beta_\circ}\slash\sim} k_\beta (\beta-\beta_\circ) = 0\,.
\end{equation}
\end{itemize}

In such a case, we say that $\widetilde{\mathfrak{B}}$ (or, equivalently, $\mathfrak{B}$) defines an \textit{open $\bigvee$-system} corresponding to the given (closed) $\bigvee$-system $\cech\,$.
\end{thmdef}
\begin{oss}
    These conditions form an over-determined system. Compatibility can impose conditions on the otherwise free data in the original closed $\bigvee$-system.
\end{oss}

\section{Coxeter Group Examples}\label{sec:Coxeterexamples}
As there is no classification of closed $\bigvee$-systems one cannot classify open $\bigvee$-systmes. Here we construct the simplest examples of open $\bigvee$-systems starting from Coxeter configurations, i.e. $\mathfrak{A}=\mathfrak{R}_W$ is the root system for a finite irreducible Coxeter groups $W\,.$ This automatically defines a $\bigvee$-system.

Since the above construction involves the differences $\beta_\circ-\beta$, it is natural to consider $\mathfrak{B}=W\omega_\circ$ to be the Weyl orbit of some weight vector $\omega_\circ\,.$ Thus by Weyl symmetry one does not have to consider all vectors $\beta_\circ$, one can just consider the single case $\beta_\circ=\omega_\circ$ and the other cases follow by Weyl symmetry, and one can also take the constants $k_\beta$ to be all equal. A classical result \cite{Bou,humphreys_1990} then says that $\omega_\circ-w(\omega_\circ)\,, w\in W$ is a sum of roots. However, sums of roots are not (in general roots) and for all these differences to be proportional to a single root places restrictions on both $W$ and the choice of $\omega_\circ\,.$

We recall the definition of a small orbit \cite{serganova}:

\begin{defn}
Let $W$ be the finite, irreducible Weyl group and let $\omega_\circ$ be an arbitrary weight. The $W$-orbit $\{ w(\omega_\circ)\,, w\in W\}$ is small if 
$\omega_\circ-w(\omega_\circ) \in \mathfrak{R}_W$ for all $\omega_\circ\neq \pm w(\omega_\circ)\,.$
\end{defn}

\noindent These were classified (by first showing that $\omega_\circ$ must be a fundamental weight) in \cite{serganova} and the results are summarised in the following table:

\begin{center}
\begin{tabular}{ c|| c l }
 $W$ & $\omega_\circ$ &  \\ 
 \hline
 \hline
 $A_n$ & $\omega_1\,, \omega_n$ & ({\rm by~symmetry})\\ 
 $B_n$ & $\omega_1$ &(and $\omega_3$ for $B_3$)\\
 $D_n$ & $\omega_1$ & (and $\omega_3\,,\omega_4$ for $D_4$)\\
 $G_2$ & $\omega_1$ &\\
\end{tabular}
\end{center}

\noindent We now apply these results to construct open $\bigvee$-systems $\mathfrak{B}=W\omega_\circ$ from the open $\bigvee$-system $\mathfrak{A}=\mathfrak{R}_W$ with $W$ and $\omega_\circ$ from the above table. Since Serganova's result is based on properties of Weyl groups, the Coxeter groups are automatically crystallographic. The ideas can also be extended to the non-crystallographic groups $I_2(N)$ and $H_3\,.$

\subsection{Crystallographic examples}

We use the standard notation for roots and weights \cite{Bou,humphreys_1990}.

\paragraph{ The Coxeter group $A_n$.}
Since every Coxeter group is a (closed) $\bigvee$-system, we automatically have
\[
\Frst_{A_n}(\bm{z})=\tfrac{1}{4} \sum_{\alpha\in \mathfrak{R}_{A_n}} \alpha({\bm{z}})^2 \log \alpha({\bm{z}})
\]
(so $h_\alpha=1$ for all $\alpha\in \mathfrak{R}_{A_n})$. To construct the associated open $\bigvee$-system we require basic properties of roots and weights.

For $A_n$ we take $V$ to be the hyperplane $\sum_{i=1}^{n+1} z_i = 0$ in $\RR^{n+1}$ and with this positive roots
\[
{\mathfrak{A}^{+}}=\mathfrak{R}_{A_n}^{+} = \{ e_i-e_j\,, 1\leq i<j \leq n+1\}
\]
and
\[
\omega_1 = e_1 - \frac{1}{n+1} \sum_{j=1}^{n+1} e_j\,.
\]
Since $W_{A_n}$ acts by permutations, 
\[
W_{A_n}(\omega_1) = \biggl\{ e_i - \frac{1}{n+1} \sum_{j=1}^{n+1} e_j\,,\quad j=1\,,\ldots n+1\biggr\}\,,
\]
and hence
\[
\mathfrak{B}^{\bullet}_{\omega_1} \cong \mathfrak{A}^{+}_{\omega_1} = \{ e_1 - e_i\,, i=2\,,\ldots\,,n+1 \}\,.
\]
Thus, the hyperplanes on each side of \cref{keyequation} match exactly and no equivalence relation is required (and the set $\mathfrak{B}^{\circ}_{\omega_1}$ is empty). Thus in the Theorem/Defintion \ref{mainresult}, equations \cref{conditionA} and \cref{conditionC} are satisfied. Condition \cref{conditionB} (assuming $k_\beta=1$) reduces to the condition $k_\beta=h_\alpha\,.$ With these results
\begin{equation}\label{eq:Ansmallorbit}
    \overset{\star}{\Omega}_{A_n}(x,\bm{z}) = \left.\sum_{i=1}^{n+1} (x - z_i) 
\log(x-z_i)\right|_{\sum z_i = 0}\,.
\end{equation}
These correspond to the solutions found in \cite{openWDVVduality}.

\begin{ex}\label{A3example} $(W=A_3,\omega_\circ=\omega_2)$
This is an example for which $\omega_\circ$ does not generate a small orbit. In the standard basis, $\omega_\circ=\omega_2=\frac{1}{2} (1,1,-1,-1)\,$, and its Weyl orbit consists of the six vectors given by permuting the signs. The non-zero differences are then
\[
\omega_2 - W_{A_3}(\omega_2) = \{ (1,0,-1,0)\,,(1,0,0,-1)\,,(0,1,-1,0)\,,(0,1,0,-1)\} \cup \bigl\{ \tfrac{1}{2} (1,1,-1,-1)\bigr\}\,.
\]
The elements in the first set are positive roots, while the single element in the second set is not. However,
\[
\mathfrak{A}^{+}_{\omega_2} = \{ (1,0,-1,0)\,,(1,0,0,-1)\,,(0,1,-1,0)\,,(0,1,0,-1)\}\,.
\]
The set of differences can thus be decomposed
\begin{eqnarray*}
    \mathfrak{B}^{\bullet}_{\omega_2} &=&  \mathfrak{A}^{+}_{\omega_2}\,,\\
     \mathfrak{B}^{\circ}_{\omega_1} &=& \bigl\{ \tfrac{1}{2} (1,1,-1,-1)\bigr\}\,.
\end{eqnarray*}
This, by itself, does not satisfy the conditions in Theorem/Definition \ref{mainresult}. To satisfy them one has to append the zero vector to the set $\mathfrak{B}$ (for details, see the $D_n$-example below). This finally gives the solution

\begin{equation}\label{eq:A3omega2}
    \overset{\star}{\Omega}_{A_3,\omega_2}(x,\bm{z}) = \sum_{\beta\in W_{A_3}(\omega_2)} (x - \beta({\bm{z}}) ) \log (x - \beta({\bm{z}}) ) - 2 x \log x\,.
\end{equation}
\end{ex}

\noindent This example shows that, given a closed $\bigvee$-system $\mathfrak{A}\,$, the open $\bigvee$-system associated to it is not unique (if it exists). 
\paragraph{ The Coxeter group $B_n$.}

The $B_n$ solution
\[
\Frst_{B_n}(\bm{z})=\tfrac{1}{4} \sum_{\alpha\in \mathfrak{R}_{B_n}} h_\alpha \,\alpha({\bm{z}})^2 \log \alpha({\bm{z}})
\]
depends on two parameters:
\[
h_\alpha = \begin{cases}
h_s & {\rm if~}\alpha~{\rm is~a~short~root\,,}\\
h_l & {\rm if~}\alpha~{\rm is~a~long~root\,.}
\end{cases}
\]
In this case, $V=\mathbb{R}^n\,$, the positive roots are:
\[
\mathfrak{A}^{+} = \mathfrak{R}_{B_n}^{+} = \{e_i\,,i=1\,,\ldots\,,n\} \cup \{ e_i\pm e_j\,, 1 \leq i<j\leq n\}\,,
\]
and $\omega_1=e_1\,.$ Then
\[
\mathfrak{A}^{+}_{\omega_1} = \{ e_1\} \cup \{e_1 \pm e_j\,, 2 \leq j \leq n\}
\]
(so $|\mathfrak{R}^{+}_{\omega_1}|=2n-1)$. Since $-\omega_1$ lies in the orbit $W\omega_1$ we obtain
\[
\mathfrak{B}^{\bullet}_{\omega_1} = \{ 2 e_1\} \cup \{e_1 \pm e_j\,, 2 \leq j \leq n\}\,.
\]
Thus the hyperplanes on each side of \cref{keyequation} match exactly and no equivalence relation is required (and the set $\mathfrak{B}^{\circ}_{\omega_1}$ is empty). Thus in the \cref{mainresult}, equations \cref{conditionA} and \cref{conditionC} are satisfied. 

The condition \cref{conditionB}, on the other hand, is satisfied provided that:
\[
2k=h_s\,,\qquad k=h_l\,.
\]
This places a restriction on the otherwise free constants, namely $h_s=2h_l\,.$ With $k=1\,,$ we have  $h_s=2\,,h_l=1\,.$ With these results, the open WDVV solution is:
\begin{equation}\label{eq:Bnsmall}
    \overset{\star}{\Omega}_{B_n}(x,\bm{z}) = \sum_{i=1}^{n} (x \pm z_i) 
\log(x\pm z_i)\,.
\end{equation}

\paragraph{ The Coxeter group $D_n$.}

For $D_n$ one has
\[
\mathfrak{A}^{+} = \mathfrak{R}^{+}_{D_n} = \{ e_i\pm e_j\,|\,1\leq i < j \leq n\}
\]
and $\omega_1=e_1\,,$ so $W\omega_1=\{\pm e_i\,, i = 1\,,\ldots\,,n\}\,.$ As in the $B_n$ example, $-\omega_1 \in W\omega_1$ but unlike $B_n$, the difference $\omega_1-(-\omega_1)=2\omega_1$ is not proportional to a root, so the set $\mathfrak{B}^\circ_{\omega_1}$ is non-empty. However, \cref{keyequation} has a non-zero residue along the hyperplane $H_{\omega_1}$ and hence cannot hold.

To overcome this problem we append the zero-vector $\bf{0}$ to the set $\mathfrak{B}$, so
\[
\mathfrak{B} = \{\pm e_i\,, i = 1\,,\ldots\,,n\}\cup\{ {\bf{0}}\}
\]
with corresponding constants $k$ and $k_{\bf{0}}$ respectively.
The conditions in Theorem/Definition \ref{mainresult} have to hold for all $\beta_\circ\in \mathfrak{B}$ but they are vacuous for $\beta_\circ={\bf{0}}$ and so, without loss of generality, we take $\beta_\circ=e_1\,.$

Then
\[
\mathfrak{A}^{+}_{\omega_1} = \{ e_1 \pm e_j\,, 2\leq j \leq n\}
\]
(so $|\mathfrak{A}^{+}_{\omega_1}|=2(n-1)$) and the set of non-zero differences
\[
\omega_1 - W_{D_n}(\omega_1) = \{ e_1\pm e_j, \,, 2\leq j \leq n\} \cup\{2 e_1\}\cup\{e_1\}
\]
decomposes into the disjoint sets
\begin{eqnarray*}
    \mathfrak{B}^{\bullet}_{\omega_1} &=& \{ e_1\pm e_j, \,, 2\leq j \leq n\}\,,\\
    & = & \mathfrak{A}^{+}_{\omega_1}\,,\\
     \mathfrak{B}^{\circ}_{\omega_1} &=& \{2 e_1\}\cup\{e_1\}\,.
\end{eqnarray*}
Condition \cref{conditionA} is thus satisfied, and \cref{conditionB} gives $k=1$ (on setting $h_\alpha=1$ for all $\alpha$). Condition \cref{conditionC} now yields the condition $2k+k_{\bf{0}}=0\,.$ With these results
\begin{equation}\label{eq:Dnsmall}
    \overset{\star}{\Omega}_{D_n}(x,\bm{z}) = \sum_{i=1}^{n} (x \pm z_i) 
\log(x\pm z_i) - 2 x \log x\,.
\end{equation}

Notice that $\overset{\star}{\Omega}_{D_3}$ and the solution $\overset{\star}{\Omega}_{A_3,\omega_2}$ in \cref{A3example} are the same up to a linear change of the $z$-variables. This correspond to the standard Weyl group isomorphism $W_{D_3}\cong W_{A_3}\,$.

\paragraph{ The Coxeter group $G_2$.}

This example will be included in the next section as a special case of a dihedral group. Since $G_2$ has long and short roots one has different constants $h_l$ and $h_s$ (as in the $B_n$-case). The construction places restriction on these otherwise free data.

\subsection{Non-crystallographic examples}

The crystallographic examples above rested on the classification of small orbits where differences between two non-proportional vectors in the Weyl orbit was a root.  This condition can be extended to cases where the differences between two non-proportional vectors in the Weyl orbit is {\sl proportional} a root, and this generalization includes non-crystallographic examples. It results from the following simple geometric fact: if the difference is proportional to a root (see Figure \ref{noncrystal}) then the constant of proportionality is 
$\pm \langle\alpha_{12},\omega_1\rangle^{*}\,.$ 

\begin{figure}
\begin{tikzcd}
\node[] at (-3,0) {};
\node[] at (4.5,0.8) { \omega_2-\omega_1 = \langle\alpha_{12},\omega_1\rangle^{*}\alpha_{12}};
    \draw[thick] (0,0) - - (2,3);
    \draw[thick, ->] (0,0) - - (1,1.5) node[anchor= east] {\omega_2};
    \draw[thick] (0,0) - - (3,2);
    \draw[thick,->] (0,0) - - (1.5,1)  node[anchor= west] {\omega_1};
    \draw[thick] (3,2) - - (2,3) ;
    \draw[thick,->] (3,2) -- (2.4,2.6);
    \draw[thick, ->] (3,2) - - (2.5,2.5) node[anchor=north east] {\lambda\alpha_{12}\!\!};
\end{tikzcd}
\caption{\label{noncrystal}}{Geometry of non-crystallographic root systems}
\end{figure}

\paragraph{ The Coxeter group $I_2(N)$.}

Consider a regular $N$-gon with a vertex at $(1,0)\,.$ We can take as weight vectors (using $\mathbb{C}$, instead of $\mathbb{R}^2$, and the Euclidean metric $dz\, d\bar{z}$) the position vectors of the vertices
\[
w_p = e^{2 \pi p i/N}\,, \qquad p=0\,,\ldots\,,N-1\,,
\]
so $w_\circ=1$ is the original vertex. As roots, one can take
\[
\alpha_p = \sqrt{2} \,e^{i(\pi/2 + \pi p/N)}\,, \qquad p=0\,, \ldots 2N-1
\]
with positive roots corresponding to $p=0\,,\ldots\,,N-1\,.$ The roots have been normalised to have length-squared equal to two. A simple calculation gives
\[
(w_\circ-w_p)({\bm{z}}) =  \langle\alpha_p,w_\circ\rangle^{*} \, \alpha_p({\bm{z}})\,,\qquad p=1\,,\ldots\,,N-1\,.
\]
Conditions \cref{conditionA}, \cref{conditionB} and \cref{conditionC} are then clearly satisfied, giving the solution
\begin{equation}\label{eq:dihedral}
    \overset{\star}{\Omega}_{I_2(N)} (x,\bm{z})= \sum_{p=0}^{N-1} ( x - w_p({\bm{z}})) \log( x - w_p({\bm{z}}))\,.
\end{equation}
Since $W_{G_2}\cong W_{I_2(6)}\,$, we get the corresponding $G_2$ open WDVV solution as a special case in the dihedral family.

\paragraph{ The Coxeter group $H_3$.}

The roots of the Coxeter group $H_3$ may be taken to be the cyclic permutations of the vectors (where 
$\tau =\frac{1+\sqrt{5}}{2}$ is the golden ratio):
\[
(\pm\sqrt{2},0,0) \quad {\rm and~}
\left\{
\begin{array}{c}
\frac{1}{\sqrt{2}} ( \pm \tau,\pm 1,\pm \tau^{-1})\,,\\
\frac{1}{\sqrt{2}} ( \pm 1,\pm \tau^{-1},\pm \tau)\,,\\
\frac{1}{\sqrt{2}} ( \pm \tau^{-1},\pm \tau,\pm 1)\\
\end{array}
\right.
\]
yielding 30 roots. The corresponding weight vectors in the smallest orbit are the cyclic permutations of the vector
\[
(\pm 1, \pm\tau,0)\,,\quad (0,\pm 1,\pm\tau), \quad (\pm\tau,0,\pm 1)\,.
\]
yielding 12 weights. A direct calculation shows that the difference between two non-proportional weights is always proportional to a root, with the constant of proportionality being $\pm\sqrt{2}$ or $\pm \sqrt{2}\,\tau\,.$ A similar calculation to Example \ref{A3example} or the $D_n$ family yields
\[
\overset{\star}{\Omega}_{H_3} = \sum_{i=1}^{12} (x - w_i({\bm{z}})) 
\log(x -w_i({\bm{z}}))  - 2 x \log x\,.
\]

\begin{oss}
In the $B_n$-case, the $\bigvee$-conditions place no restriction on the lengths of the two constants $h_s$ and $h_l\,.$ However, the open $\bigvee$ conditions are only satisfied if these scales are related. By simple scaling one obtains a $\bigvee$-system where all the roots are the same length. This requirement also come from almost-duality: the almost-dual $B_n$ superpotential constructed by \cite{Dub04} has, by construction, all the roots the same length. The same condition occurs if one starts with the $G_2$ root system - applying the open $\bigvee$-conditions forces the covectors in configuration to all have the same lengths - as in the dihedral examples above.
\end{oss}

\section{Superpotentials from Open WDVV equations}\label{sec:superpotentials}
Recall, from Theorem \ref{HuwitzTheorem}, that for an underlying Frobenius manifold, that the relation
\[
\overset{\star}{\Omega}_x = \log\lambda
\]
holds. Given the form of $\overset{\star}{\Omega}_{\widetilde{\mathfrak{B}}}\,$, it immediately follows that
\[
\lambda_{\mathfrak{B}}(x) = \prod_{\beta\in\mathfrak{B}}
 (x - {{\beta}}({\bm{z}}))^{k_\beta}\,.
 \]
Here we use this result to recover the standard superpotentials for the Coxeter groups examples constructed in the last section. These are, of course, already known, but the construction can also be used to superpotentials that generate dual-type Frobenius structures but not Frobenius manifolds. In particular, in all the examples in \cref{sec:Coxeterexamples}, $k_\beta=1$ except for, at most, $k_{\bm{0}}$ when the addition of the zero vector to the set $\mathfrak{B}$ has been necessary. 

When the $\bigvee$-system is a Coxeter system associated to the Coxeter group $W\,$, the superpotential is $W$-invariant. Furthermore, by Chevalley's Theorem, the ring of polynomial $W$-invariants admits a minimal system of generators of positive degree $\sigma_1(\bm{z}),\dots, \sigma_{\rank W}(\bm{z})\in \CC[\bm{z}]^W\,$. It follows that we can write, when $\bm{0}\notin\mathfrak{B}\,$:
\[
\prod_{\beta\in\mathfrak{B}}  (x - \beta({\bm{z}})) = 
x^{|\mathfrak{B}|} + \sum_{i=2}^{|\mathfrak{B}|} s_i({\bm{\sigma}({\bm{z}})}) x^{|\mathfrak{B}|-i}\,,
\]
where the $\{s_i\}_{i=2,\dots, \abs{\mathfrak{B}}}$ are homogeneous $W$-invariant polynomials, and can therefore be written as polynomial functions in the basic invariants. Note, in particular, that $s_1=0\,$ due to the centre-of-mass condition, and that $s_2$ needs to be proportional to the unique degree-two invariant polynomial. With this, one can recover the standard superpotentials for the $A_n\,,B_n$ and $D_n$ Saito Frobenius manifolds \cite{Duborbitspaces}.

\begin{ex} {(The $A_n\,,B_n$ and $D_n$ Coxeter groups)}
\begin{eqnarray*}
    \lambda_{A_n}({x}) & = & {x}^{n+1} + \sum_{i=2}^{n+1} s_i({\bm{z}})\, x^{n+1-i}\,,\\
    &&\\
    \lambda_{B_n}({x}) & = & {x}^{2n} + \sum_{i=1}^{n} s_{2i}({\bm{z}})\, x^{2(n-i)}\,,\\
    &&\\
    \lambda_{D_n}({x}) & = & \frac{1}{x^2} \left({x}^{2n} + \sum_{i=1}^{n} s_{2i}({\bm{z}})\, x^{2(n-i)}\right)\,.
\end{eqnarray*}
The $D_n$-superpotential differs from the one obtained by the unfolding of the $f_{D_n}(x,y) = x^{n-1} + x y^2$ singularity, but gives rise to the same Frobenius manifold \cites{Bertolaphd,Zuo07}. Note also that the pole terms has its original in the fact that the set $\mathfrak{B}^\circ_{\omega_1}$ is non-empty. The non-small-orbit solution for the Weyl group $A_3$ in \cref{A3example} gives the same superpotential as $\lambda_{D_3}\,$, as we have discussed. 
\end{ex}

\noindent The superpotentials for the non-crystallographic examples require more care.

\begin{ex}{(The dihedral group $I_2(N)$)}
Since the invariant polynomials for the dihedral group $I_2(N)$ are generated by the polynomials $s_2(\bm{z}):=z_1^2+z_2^2$ and $s_N(\bm{z}):=z_1^N+z_2^N$ it follows that the superpotential is
\[
\lambda_{I_2(N)}(x) = x^N + \sum_{2a+b=N} c_{ab}\, s_2({\bm{z}})^a \, x^b + s_N({\bm{z}})\,,
\]
for various explicit constants $c_{ab}\,.$ This clearly sits inside the superpotential $\lambda_{A_{N-1}}$ and this corresponding to the classical embedding $I_2(N) \hookrightarrow A_{N-1}$ obtained via folding arguments \cite{ZuberotherCoxeter}. For example, for $I_2(8)$ we obtain the superpotential
\[
\lambda_{I_2(8)}(x) = x^8 + t_2 \,x^6 + \frac{5}{16} t_2^2\, x^4 + \frac{1}{32} t_2^3 \,x^2 + \left( t_1 + \frac{1}{2048} t_2^4\right)\,.
\]
where the ambiguity in the definition of the highest degree invariant polynomial has been used to ensure that the $\{t_i\}$ are the flat coordinates for the intersection form. This differs from the more usual superpotential
\[
\lambda(x) = x^8 + t_2 \, x^6 +\left( t_1 + \frac{27}{512} t_2^4\right)\,,
\]
which gives the same Frobenius manifold. Note that the factorization of this later superpotential in terms of weights in the set $\mathfrak{B}$ is lost.
\end{ex}

\begin{ex}{(The $H_3$-Coxeter Group)} The same idea may be applied to the $H_3$-superpotential, given the degrees of the polynomial invariants are $2\,,\,6$ and $10\,$. This gives a superpotential contained in the $D_6$-family and corresponds to the classical embedding $H_3 \hookrightarrow D_6\,.$

A curious embedding of $H_3$ as a submanifold of $A_9$ is also possible via the superpotential
\[
\lambda(x) = \frac{1}{10} x^{10} +t_3 \, x^8 + (t_2 - 16 t_3^3) \, x^4 + \left( t_1 - 16 t_2 t_3^2 + \frac{384}{5}
t_3^5\right)\,.\]
\noindent This is curious in the sense that it does not come from a classical folding argument. This result was found by direct calculation.
\end{ex}

\section{Conclusion}
The examples constructed in the last section were all based on Coxeter groups and hence all have 
$\mathfrak{A} = \mathfrak{R}_W\,.$ In addition, as well as being derived from an orbit-space/Saito-construction, they all can also be derived from Hurwitz-Frobenius manifolds and hence fall under the construction given in Theorem \ref{HuwitzTheorem}. Other classes of examples may also be found by introducing multiplicities and/or generalized root systems.

\begin{ex}[The series $A(m,n)$ \cites{serganova,FV08,SV14}]
Consider the data:
\begin{eqnarray*}
\mathfrak{A} & = & \{ \alpha_{ij} = e_i - e_j\,, i\neq j\,,i,j=0, \ldots\,, n+m\}\,,\\
\varepsilon_i & = & \begin{cases}
    +1 & i=0\,,\ldots\,, m \,,\\
    -1 & i=m+1\,,\ldots\,,n+m\,,\\
    \end{cases}\\
    g & = & \left.\sum_{i=0}^{n+m} \varepsilon_i \,dz_i^2 \,\right|_{\sum \varepsilon_j z_j = 0}\,,
\end{eqnarray*}

\noindent which has a small orbit \cite{SS21}
\[
\mathfrak{B} = \biggl\{ w_i = e_i + \frac{1}{n-m+1} \sum_{r=0}^{n+m} \varepsilon_r \, e_r\,,\quad i=0,\dots, m+n\,\biggr\}\,,
\]
and hence $w_i({\bm{z}}) = z^i\,.$
This defines a Hurwitz space with superpotential and dual-type open WDVV solution:
\[
\lambda_{A(m,n)} (x)= \left.\prod_{i=0}^{m+n} (x-z_i)^{\varepsilon_i}\right|_{\sum \varepsilon_j z_j = 0}\,,
\qquad
\overset{\star}{\Omega}_{A(m,n)}(x,\bm{z}) = \left.\sum_{i=0}^{m+n} \varepsilon_i (x-z_i) \log(x-z_i)\right|_{\sum \varepsilon_j z_j = 0}\,.
\]
The underlying dual-type WDVV solution is \cite{rileyeaw}:
\[
\Frst_{A(m,n)}(\bm{z})=\tfrac14\sum_{i\neq j}\varepsilon_i\varepsilon_j\,(z_i-z_j)^2\log(z_i-z_j)\,.
\]
A similar result holds for the series $B(m,n)\,.$ 
\end{ex}

More generally, one may consider superpotentials with zeros and poles of higher-order,
\[
\lambda(x) = \left.\prod_{i=0}^{m+n} (x-z_i)^{k_i}\right|_{\sum k_j z_j = 0}\,,\qquad k_i\in\mathbb{Z}^\ast\,.
\]
For simple zeros this still defines a Hurwitz-Frobenius manifold (including for higher-order poles), but if one introduces multiplicities for the zeros, the metric given by the residue formula \cref{eq:residueformulaeLG} is not flat. Geometrically these multiplicities define a discriminant submanifold in the space and induced Saito metric is not flat. However, the induced intersection form on the submanifolds is flat \cite{Dubrovin1996}, and hence one obtain the almost-dual structures defined on the discriminant \cite{StrachanSubMan}. The associated dual-type WDVV solution was also explicitly computed in \cite{rileyeaw}. The associated $\Omega$ then gives a solution to the open WDVV equations. While the geometry depends on the $k_i\in\mathbb{Z}^\ast\,$, the algebraic properties do not require this, and hence one can obtain solutions in the same class that are not directly related to Frobenius manifolds.

More algebraically, subsystems of $\bigvee$-systems and restrictions to subspaces preserve the $\bigvee$-conditions and one area of future research would be to extend these results to open $\bigvee$-systems. Trigonometric and elliptic $\bigvee$-systems have also been studied, and hence the results in this paper could also be extended to open trigonometric and elliptic $\bigvee$-systems. Examples in these families, again based on Hurwitz spaces, have already been constructed. A better understanding of the relationship between certain rational and trigonometric $\bigvee$-systems is also required \cite{FKS24}.

One of the reasons why open WDVV theory was developed was the lack of a non-degenerate metric on the space $\widetilde{V}$ and hence one has vector-valued potential functions (as there is no metric to lower and index). Interestingly, the $\mathfrak{B}$-system does provide a non-degenerate metric (the analogue of the intersection form) on $\widetilde{V}$. Consider
\[
    (\widetilde{X},\widetilde{X})  = \sum_{\tilde{\beta}\in\widetilde{\mathfrak{B}}} 
k_{\widetilde{\beta}} \widetilde{\beta}(\widetilde{X})^2
\]
or, in components,
\begin{eqnarray*}
    \left( (x,{\bm{z}}), (x,{\bm{z}}) \right) & = & 
    \sum_{\beta\in\mathfrak{B}} k_\beta (x-\beta({\bm{z}}))^2\,,\\
    & = & h \,x^2 + \sum_{\beta\in\mathfrak{B}} k_\beta \beta({\bm{z}})^2\,
\end{eqnarray*}
on using the centre-of-mass condition. In the Coxeter examples the last term is proportional to the metric on $V$, from the uniqueness of degree-two invariant polynomials. This suggests one could develop a Saito-type construction of open WDVV-equations where the Saito metric is degenerate on $\widetilde{V}$ but non-degenerate on $V\,.$

Finally, it would be of interest to obtain examples of open $\bigvee$-systems corresponding to the various known $\bigvee$-systems that do not arise from Coxeter systems via taking restriction or subsystems, and to extend the theory to higher-rank extensions. For instance, the other families of solutions to the open WDVV equations in type-$A$ computed in \cite{openWDVVduality} suggest that a similar root-theoretic construction might be possible in the trigonometric and elliptic cases.

\begin{acknowledgements}
   A. P. was supported by a Ph.D. studentship of the EPSRC Doctoral Training Partnership (EP/W524359/1). 
\end{acknowledgements}

\phantomsection
\addcontentsline{toc}{section}{References}
\bibliography{biblio} 

\end{document}